\documentclass[12pt]{article}
\usepackage{mathpazo}
\usepackage[pagewise]{lineno}
\usepackage{amssymb}
\usepackage{graphicx}
\usepackage{amsfonts}
\usepackage{amsmath}
\usepackage{amsthm}
\usepackage{mathtools}
\usepackage{ascmac}
\usepackage{bm}
\usepackage{color}
\usepackage{pgfplots}
\pgfplotsset{compat=1.16}
\usepackage{here}
\usepackage{multirow}
\usepackage[hang,small,bf]{caption}
\usepackage[subrefformat=parens]{subcaption}
\usepackage{authblk}
\usepackage[colorlinks=true, citecolor=teal, linkcolor=teal, urlcolor=teal]{hyperref}
\usepackage[authoryear]{natbib}
\usepackage{threeparttable}
\usepackage{comment}
\usepackage{stackengine}
\usepackage{tikz}
\usetikzlibrary{positioning, calc}
\usetikzlibrary{arrows.meta}
\usetikzlibrary{bending}
\usepackage{setspace}

\def\E#1{\mathbb{E}\left[#1\right]}

\def\P#1{\mathbb{P}\left[#1\right]}

\newcommand{\indep}{\perp\!\!\!\!\perp}

\numberwithin{equation}{section}

\newtheorem{assumption}{Assumption}[section]

\newtheorem{corx}{Corollary}
\newenvironment{corollary}
  {\pushQED{\qed}\corx}
  {\popQED\endcorx}

\newtheorem{example}{Example}

\newtheorem{propx}{Proposition}
\newenvironment{prop}
  {\pushQED{\qed}\propx}
  {\popQED\endpropx}
  
\theoremstyle{definition}
\newtheorem{remarkx}{Remark}
 \newenvironment{remark}
  {\pushQED{\qed}\remarkx}
  {\popQED\endremarkx}

\makeatletter
\renewenvironment{proof}[1][\proofname]{%
  \par\pushQED{\qed}\normalfont%
  \topsep6\p@\@plus6\p@\relax
  \trivlist\item[\hskip\labelsep\bfseries#1\@addpunct{.}]%
  \ignorespaces
}{%
  \popQED\endtrivlist\@endpefalse
}
\makeatother

\makeatletter
\def\blfootnote#1{%
  \begingroup
  \renewcommand\thefootnote{}\footnote{#1}%
  \addtocounter{footnote}{-1}%
  \endgroup
}
\makeatother

\makeatletter
\renewcommand*{\@fnsymbol}[1]{\ensuremath{\ifcase#1\or \flat\or * \else\@ctrerr\fi}}
\makeatother

\renewcommand{\footnotesize}{\fontsize{10pt}{20pt}\selectfont}

\title{\bf\fontsize{18.3pt}{22pt}\selectfont Evaluating Threshold Policies in Randomized Threshold Designs: A Cautionary Tale for Regression Discontinuity Designs}
\author[$\natural$]{Shunsuke Imai}
\author[$\sharp$]{Koshi Nishida}
\author[$\flat$]{Yuta Okamoto}
\affil[$\natural$]{Graduate School of Economics, Kyoto University}
\affil[$\sharp$]{Graduate School of Economics, The University of Tokyo}
\affil[$\flat$]{Graduate School of Economics, Hitotsubashi University}
\date{\today}

\begin{document}
\maketitle
\blfootnote{${}^{\natural}$\href{mailto:imai.shunsuke.57n@st.kyoto-u.ac.jp}{imai.shunsuke.57n@st.kyoto-u.ac.jp}. ${}^{\sharp}$\href{mailto:nishida-koshi841@g.ecc.u-tokyo.ac.jp}{nishida-koshi841@g.ecc.u-tokyo.ac.jp}. ${}^{\flat}$\href{mailto:yuta.okamoto@r.hit-u.ac.jp}{yuta.okamoto@r.hit-u.ac.jp}. 
We gratefully acknowledge Kentaro Kawato, Tomoki Nishiyama, Ryo Okui, and Shosei Sakaguchi for their comments.}
\vspace{-1cm}
\begin{abstract}
\onehalfspacing
    Many policies assign treatment according to whether a score crosses a cutoff. Regression discontinuity (RD) designs identify the effect of treatment assignment, but the threshold policy itself may also reshape individuals’ incentives, inducing behavioral responses that affect outcomes even holding treatment status fixed---a channel that conventional RD designs cannot capture. We develop a framework that exploits randomized variation in policy thresholds, together with rank-invariance-type restrictions, to identify the treatment-assignment and incentive-response effects. We illustrate the empirical relevance of this distinction using data from a merit-based scholarship experiment in Malawi. In this illustration, the conventional RD estimate is positive, while the incentive-response effect is negative and more than twice as large in magnitude. These findings suggest that threshold policies may generate unintended adverse behavioral responses, potentially reflecting discouragement induced by demanding thresholds, and caution against relying solely on conventional RD estimates when evaluating threshold policies.
\end{abstract}

\vspace{0.2cm}

{\textbf{Keywords:} Randomized experiment, regression discontinuity designs}

\vspace{0.2cm}

{\textbf{JEL Classification:} C14, C21}

\newpage
\doublespacing

\section{Introduction}\label{sec:intro}

Threshold policies, in which treatment status depends on whether an individual's score crosses a cutoff, are ubiquitous. Examples include test-score thresholds for college admission and financial aid, legal limits on blood alcohol concentration or driving speed, credit-score cutoffs for lending, and emissions thresholds for subsidies or regulatory relief.

Given their ubiquity, understanding the causal effects of threshold policies is central to policy design. Economists have long used regression discontinuity (RD) designs to identify effects operating through treatment assignment at a cutoff. Such estimates provide important evidence on the effectiveness of the treatment itself. However, treatment assignment is not the only channel through which a threshold policy can affect outcomes.

In particular, threshold policies may also alter individuals' incentives regardless of the realized treatment status. In response to the policy, individuals may adjust their effort or other decisions, thereby changing their scores---the running variable in an RD design.\footnote{(Partial) manipulation of the score does not, by itself, invalidate a conventional RD design, provided that agents do not have precise or complete control over the realized score (\citealp{Lee:2008, McCrary:2008}). Hence, the setup considered in this paper does not conflict with the conventional RD framework. Further discussion is provided later.} 
These changes in incentives, behavior, and scores may themselves affect subsequent outcomes, even holding treatment status fixed. For example, a more stringent admissions threshold at a selective university may induce students to study harder, leading to higher test scores and higher future earnings through greater human-capital accumulation.
In this case, even if the conventional RD effect is small or even zero, the threshold policy may still be valuable as an incentive scheme: its value lies not in changing who receives treatment, but in altering the incentives individuals face under the policy.

Despite the empirical relevance of this second channel, most of the RD literature abstracts from it or does not explicitly account for it. At the same time, much of the experimental literature on threshold-based incentives focuses on the effect of assignment to the incentive scheme itself, without separately identifying the effect of reward receipt and the behavioral response induced by the threshold policy. This paper bridges these two approaches by developing a framework that identifies both channels in a directly comparable way: the effect operating through treatment receipt at the cutoff and the effect generated by behavioral responses to the threshold. This decomposition clarifies how the conventional RD estimand relates to the broader policy effect and sharpens the resulting policy implications.

To study these effects, we consider an experimental setting in which individuals are randomly assigned one of two threshold values, $l$ or $h$, with $l<h$. This framework includes a comparison between a specific threshold $l$ and a ``no-policy'' baseline, obtained by setting $h=\infty$. One example of such a design is the following:

\begin{example}[Student Achievement and Retention Demonstration Project]\label{ex1}
    In the Canadian Student Achievement and Retention (STAR) Demonstration Project, studied by \cite{Angrist09}, eligible first-year university students were randomly assigned to a merit-scholarship offer or to a control condition. The offer entitled students to a scholarship if they attained a pre-specified first-year GPA target. Viewing this target as $l$, the experiment corresponds to a randomized comparison between a threshold policy at $l$ and a no-threshold policy, $h=\infty$.
\end{example}

Similar randomized merit-threshold designs are employed in several studies (e.g., \citealp{Kremer09, Leuven10, Berry22}).
Quasi-random variation can also arise when eligibility thresholds vary across cohorts. For example, Colombia's Ser Pilo Paga program uses cohort-specific test-score cutoffs to determine eligibility for financial aid \citep{Londono-Velez_etal:2020aejep,Londono-Velez_etal:2026}. If the distributions of potential outcomes and other relevant characteristics are stable across adjacent cohorts, such threshold changes can be viewed as approximately random.

As illustrated by the studies discussed above, randomized threshold policies have been used in a range of experimental settings. To our knowledge, however, this variation has not been used to separately identify and compare the treatment-assignment and incentive-response effects. We combine randomized variation in threshold regimes with the discontinuous treatment rule within each regime to distinguish the two channels. Even with randomized thresholds, however, the incentive-response effect is not identified without additional restrictions. We show that under a rank-invariance-type assumption, both effects are point identified at the policy cutoff and can therefore be directly compared; we also derive partial-identification results under weaker restrictions. The treatment-assignment effect at the cutoff coincides with the conventional RD estimand.

By disentangling these two effects, our framework allows for more nuanced policy implications. 
For example, if the treatment-assignment effect is positive while the incentive-response effect is negative—as may occur when demanding cutoffs induce discouragement—a need-based treatment rule may be preferable to a merit-based rule because the former may avoid distorting individuals' behavior. By contrast, if the treatment-assignment effect is negligible but the incentive-response effect is positive and sufficiently large, the threshold-based treatment rule may still be justified as an incentive scheme, even though the conventional treatment effect is very small.
This highlights that understanding the effectiveness and consequences of threshold policies requires accounting for both mechanisms.

We apply our identification results to an experimental setting in which students in the treatment group receive merit-based financial awards, whereas those in the control group do not. We examine the effects of this program on students' educational attainment.
In this application, the conventional RD estimate is positive, suggesting a positive effect of award receipt. Taken at face value, this finding might suggest that the program should be implemented. However, we find that the incentive-response effect is negative and more than twice as large in magnitude as the RD estimate. As a result, the overall effect of the policy on students' educational attainment is negative.
These findings suggest that, first, the program may not be desirable in its current form. Second, although financial aid itself may be beneficial, assigning it on the basis of merit may generate adverse incentive-response effects. 

Beyond this particular application, these results yield broader insights for policy design. Our framework captures effects that would be overlooked by a standard RD design alone, thereby enabling a richer economic analysis and more nuanced policy implications.

\paragraph{Plan of the Article.}
This article proceeds as follows. After discussing the related literature in the remainder of this section, Section \ref{sec:motiv} presents a motivating empirical example that illustrates the practical importance of disentangling the two effects. Section \ref{sec:sharp} introduces the setup and target parameters, and then establishes the corresponding identification results. Section \ref{sec:emp} demonstrates the empirical relevance of our framework by applying our procedure to real-world data. 
Section \ref{sec:concl} concludes. All proofs are collected in Appendix \ref{sec:proof} of the Online Appendix.

\paragraph{Related Literature.}

This paper is closely related to the RD literature, which typically exploits threshold-based treatment assignment to identify treatment effects at a cutoff; see \cite{Cattaneo_Titiunik:2022} for a review. Several studies instead consider policy changes in the threshold position. \cite{Dong_Lewbel:2015} study marginal changes in the cutoff, \cite{Bertanha:2020} evaluates counterfactual treatment assignment rules, and \cite{Crippa:2025} studies welfare-maximizing thresholds. These approaches, however, abstract from endogenous responses of the running variable to the policy rule. Our framework instead allows both the score and subsequent outcomes to respond to the threshold policy, which matters because individuals attaining the same observed score under different policy regimes may differ in unobserved characteristics.

Most closely related is the contemporaneous work of \cite{Liao:2025}, which also studies threshold policies when the running variable responds endogenously. Our analysis differs in three main respects. First, while \cite{Liao:2025} focuses primarily on aggregate policy effects under counterfactual cutoffs, our main objects are the treatment-assignment and incentive-response effects themselves. Identifying both at a common policy cutoff allows us to compare their signs and magnitudes directly and to assess how informative the conventional RD estimand is about the broader consequences of a threshold policy. Accordingly, our theoretical and empirical analyses emphasize that the conventional RD estimand can provide a misleading picture when the incentive-response channel is quantitatively important, a question that is not examined by \cite{Liao:2025}. Second, the research designs differ: \cite{Liao:2025} uses a two-period, two-group setting with conditional parallel-trends-type restrictions, whereas we exploit cross-sectional random or quasi-random variation in threshold regimes together with restrictions on cross-regime ranks. Third, our potential outcomes are explicitly indexed by the threshold regime, allowing the threshold itself to affect outcomes through channels other than the score. Our definition of the total policy effect also follows the treatment status induced under each regime, whereas \cite{Liao:2025} compares a treated outcome under the threshold policy with an untreated outcome under the no-policy regime. That said, this difference in formulation appears largely extendable, and the two approaches may be viewed as complementary: when randomized threshold variation is unavailable but a suitable difference-in-differences-type setting exists, \citeauthor{Liao:2025}'s approach may provide an alternative route to evaluating threshold policies with endogenous behavioral responses.

Our work is also related to multi-cutoff RD designs, including \cite{Cattaneo_etal:2021} and \cite{Okamoto_Ozaki:2025}. These studies identify conventional treatment effects away from the cutoff while holding the policy environment fixed, whereas we study causal effects across different threshold policies.

Finally, our framework is related to causal mediation analysis. Rather than relying on standard sequential-ignorability conditions (\citealp{Imai10}; see also \citealp{Heckman_Pinto:2015}), we exploit the deterministic treatment-assignment structure of RD designs together with rank-invariance-type restrictions. This allows us to identify an incentive-response effect conditional on the running variable that is directly comparable to the conventional RD estimand.

\section{Motivating Empirical Example}\label{sec:motiv}

This section uses a financial rewards experiment as a motivating example to illustrate the potential importance of distinguishing between the two effects of a threshold policy. We revisit this setting in Section \ref{sec:emp}, where we apply our proposed procedure.

\cite{Berry22} conducted a randomized controlled trial in Malawi in 2015 to examine the effects of scholarship programs on students' educational attainment. In February 2015, students were randomly assigned to one of three groups: two scholarship programs or a control group. Our analysis focuses on the scholarship program referred to by the authors as the population-based scholarship arm, as well as the control group.
Students assigned to the population-based scholarship arm were eligible to receive an award of MWK 4,500 (USD 9.70) if they scored in the top 15 percent on a final examination administered in June 2015. As noted by \cite{Berry22}, relative to Malawi's per-capita GDP of approximately USD 363, this amount constituted a nontrivial financial incentive. The awards were disbursed in October 2015. We examine the effect of this scholarship program on test scores in the subsequent grade, measured in March 2016.

Two channels are economically relevant in this setting. The first is the effect of receiving the financial award itself. Through the relaxation of household financial constraints, the cash transfer may improve subsequent educational achievement (e.g., by enabling households to purchase educational inputs such as textbooks, improve children's nutritional status, or reduce the need for child labor after the experimental period).

Conversely, the demanding cutoff---requiring students to rank in the top 15 percent---may discourage students for whom the target appears difficult to attain. When the perceived probability of earning the award is low, students may have little incentive to exert additional effort, while the introduction of an external reward may also crowd out intrinsic study motivation, two mechanisms highlighted by \cite{Leuven10}.
This reduction in effort may, in turn, lead to poorer performance on subsequent tests.

The relative importance of these two channels has distinct policy implications. If the treatment-receipt channel dominates, expanding financial assistance may be an effective means of improving educational attainment in Malawi. If the incentive channel is positive and sufficiently large, however, unconditional aid may have limited effects even when the effect of receiving the award is positive; in that case, the design of incentives may be central. Conversely, if the incentive channel is negative and dominates the treatment-receipt channel, a threshold-based award program may reduce educational attainment overall despite a positive effect of receiving the award. In that case, unconditional aid may be preferable.

Thus, the substantive conclusion may differ sharply depending on the sign and relative magnitude of the two channels. This motivates the subsequent analysis.


\section{Randomized Threshold Policy Designs}\label{sec:sharp}

\subsection{Setup}\label{subsec:setup}
Let $C_i\in\{l,h\}$ with $l<h$ denote the policy threshold, $X_i\in \mathcal X \subseteq\mathbb{R}$ the score variable, $Y_i\in\mathbb{R}$ the outcome variable, and $D_i\in\{0,1\}$ the treatment indicator, which equals one if individual $i$ receives the treatment and zero otherwise.
Let $X_i(c)$, $c\in\{l,h\}$, denote the potential score variable that would be realized if the threshold were exogenously set to $c$. 
Throughout this article, we assume that $X_i(c)$ is continuously distributed, as in the RD literature.
We observe $X_i=X_i(h)\mathbf{1}\{C_i=h\}+X_i(l)\mathbf{1}\{C_i=l\}$. The potential treatment status under threshold $c$ is therefore $D_i(c)=\mathbf{1}\{X_i(c)\geq c\}$, and the observed treatment status is $D_i=D_i(C_i)=\mathbf{1}\{X_i(C_i)\geq C_i\}$. 
We can now define the potential outcomes by $Y_i(c,x,d)$ for $c\in\{l,h\}, d\in\{0,1\}$, and $x\in\mathcal X$, and we observe $Y_i = Y_i(C_i, X_i(C_i), D_i(C_i))$.

\begin{figure}
    \centering
    \includegraphics[width=0.5\linewidth]{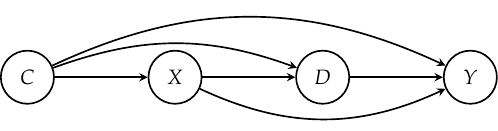}
    \caption{Causal Paths}
    \label{fig:dag1}
\end{figure}
This framework accommodates a variety of causal mechanisms. A schematic representation of the corresponding causal pathways is presented in Figure~\ref{fig:dag1}.
In particular, we allow for both the paths $C\to Y$ and $C\to X\to Y$, which may be economically important in many applications. For example, in the educational setting discussed in Section~\ref{sec:motiv}, the position of the threshold may affect students' motivation to study. Such changes in motivation may directly affect subsequent educational outcomes, corresponding to the path $C\to Y$. They may also alter students' study effort and thereby affect a short-term test score $X$, which may in turn influence the subsequent outcome $Y$, corresponding to the path $C\to X\to Y$.
While conventional RD analyses typically abstract from these pathways, our framework allows them explicitly, and our interest lies in identifying both these effects ($C\to Y$ and $C\to X\to Y$) and the conventional RD effect ($D\to Y$) so that the two can be directly compared.

\begin{remark}[A word of caution]\label{rem:manip}
    To avoid confusion, we here re-emphasize that \textit{partial} score manipulation, whereby the score ``is influenced partially by (1) the individual's attributes and actions, and (2) by random chance'' \citep{Lee:2008}, does not invalidate identification in standard RD designs. 
    The relevant exception is complete manipulation, in which agents have precise and full control over their realized scores. This distinction is apparent in a common RD application in which the running variable is a student’s test score and the treatment is college enrollment. Although students can influence their test scores by adjusting their study effort, they generally cannot determine their realized scores perfectly because performance also depends on idiosyncratic factors. Thus, allowing agents to influence the score through their behavior does not place our framework in conflict with the conventional RD setup.
\end{remark}

Before proceeding to our main identification analysis, we introduce several assumptions that are maintained throughout the analysis.
\begin{assumption}\label{assumption:rct}
    $C_i \indep(X_i(l), X_i(h), \{Y_i(c,x,d)\}_{c\in\{l,h\}, x\in\mathcal{X},d\in\{0,1\}})$.
\end{assumption}

\begin{assumption}\label{assumption:cont}
    For each $c\in\{l,h\}$ and $d\in\{0,1\}$, the following conditions hold:
    \begin{description}
        \item[(i)] $\E{Y_i(c,x,d) \mid X_i(c) = x}$ is continuous in $x \in \mathcal X$.
        \item[(ii)] The support $\mathcal X$ is a compact interval, and $X_i(c)$ has a continuous and strictly positive density on $\mathrm{int} (\mathcal X)$.
    \end{description}
\end{assumption}

\begin{assumption}\label{assumption:propensity}
    $\P{D_i(l) = 1} > \P{D_i(h) = 1}$.
\end{assumption}
\begin{remark}
    This assumption is imposed primarily to simplify the exposition. 
    If the treatment-propensity ordering is reversed, analogous identification and decomposition results can be obtained by interchanging the roles of the treated and untreated states. In the main text, we focus on this case as it appears to be the most empirically relevant settings.    
\end{remark}

Assumption \ref{assumption:rct} holds in the experimental setting where the threshold position is randomly assigned. 
Assumption \ref{assumption:cont} is a mild continuity assumption.
Assumption \ref{assumption:propensity} requires the population treatment propensity to be strictly higher under the more lenient threshold.
In an important special case with $h=\infty$, so that $\P{D_i(h)=1}=0$, the assumption is automatically satisfied provided that $\P{D_i(l) = 1}>0$.

\subsection{Target Parameters}\label{subsec:target}

Threshold policies may affect outcomes through distinct channels.
The first contrast we consider holds the score and threshold fixed and changes only treatment status:
\begin{align*}
    Y_i(c,x,1)-Y_i(c,x,0).
\end{align*}
This contrast captures the individual \textit{treatment-assignment effect} for individuals whose potential score under threshold regime $c$ is equal to $x$.
Define the local average treatment-assignment effect at $x$ by
\begin{align*}
    \tau_c(x)
    =
    \E{Y_i(c,x,1)-Y_i(c,x,0)\mid X_i(c)=x},
    \quad c\in\{l,h\}.
\end{align*}
When $x$ is set equal to the cutoff $c$, $\tau_c(c)$ coincides with the conventional RD estimand and is identified under Assumption~\ref{assumption:rct} and the usual RD continuity condition (\citealp{Hahn_etal:2001}). A formal identification result is provided below.

Now consider a policy change that lowers the threshold from $h$ to $l$.
When $h=\infty$, this policy change corresponds to introducing a threshold policy with cutoff $l$.
In addition to the treatment-assignment effect defined above, we can consider the following contrast:
\begin{align}
    Y_i(l,X_i(l),d)-Y_i(h,X_i(h),d).\label{eq:ind-incentive-effect}
\end{align}
This contrast holds treatment status fixed while changing the threshold regime and the score induced under that regime. It therefore captures the effect of the policy change operating through channels other than treatment assignment, including both changes in the score and other threshold-induced responses that may directly affect the outcome. 

To better understand this effect, it is instructive to decompose 
\begin{align*}
    &Y_i(l,X_i(l),d)-Y_i(h,X_i(h),d)\\
    &= 
    \underbrace{Y_i(h,X_i(l),d)-Y_i(h,X_i(h),d)}_{C\to X\to Y}
    + 
    \underbrace{Y_i(l,X_i(l),d)-Y_i(h,X_i(l),d)}_{C\to Y}.
\end{align*}
The first term captures the effect operating through the change in the score induced by the threshold policy, holding the threshold regime fixed at $h$. The second term captures the remaining effect of changing the threshold regime while holding the score fixed at $X_i(l)$. 
In the context of our empirical application, for example, a threshold policy may alter students' motivation to study, thereby changing their study effort and short-term academic achievement $X$, which in turn affects subsequent academic achievement $Y$. At the same time, the policy-induced change in motivation may persist and directly affect subsequent academic achievement even conditional on the realized score $X$. The two terms above capture these two mechanisms, respectively. In this sense, we refer to their combined effect in \eqref{eq:ind-incentive-effect} as the individual \textit{incentive-response effect}, evaluated at treatment status $d$.

We now define the local average incentive-response effect as follows:
\begin{align*}
    \theta_{c,d}(x) = \E{Y_i(l,X_i(l),d)-Y_i(h,X_i(h),d)\mid X_i(c) = x},\quad \text{for }c\in\{l,h\} \text{ and } d\in\{0,1\}.
\end{align*}
The central challenge in studying threshold policies is the identification of this effect. In general, it is not identified even in an ideal experimental setting. To illustrate, consider individuals whose potential score under threshold $l$ equals the cutoff, $X_i(l)=l$. For these individuals, the average incentive-response effect when treatment status is fixed at $d=0$ is
\begin{align}
    \theta_{l,0}(l)&=\E{Y_i(l,X_i(l),0)-Y_i(h,X_i(h),0)\mid X_i(l)=l}\notag\\
    &=
    \lim_{x\uparrow l}\E{Y_i\mid X_i=x, C_i=l} - \E{Y_i(h,X_i(h),0)\mid X_i(l)=l},\label{eq:second-l}
\end{align}
where the second equality uses Assumptions \ref{assumption:rct}--\ref{assumption:cont}.
The first term is identified from the observed data, whereas the second term is not identified by random assignment alone. The difficulty is that the potential outcome in the second term is evaluated at the counterfactual score $X_i(h)$, whereas the conditioning event is defined in terms of the potential score under the alternative threshold, $X_i(l)$.

\subsection{Point Identification under Rank-Invariance Assumption}\label{subsec:ident}
To achieve identification of $\E{Y_i(h,X_i(h),0)\mid X_i(l)=l}$ in \eqref{eq:second-l}, we have to impose some restriction that bridges the two threshold-policy regimes.
To this aim, we start with the rank-invariance assumption, which is commonly employed in various econometric problems, including quantile treatment-effect models and nonseparable triangular models (e.g., \citealp{Koenker01, Chernozhukov_Hansen:2005, Torgovitsky:2015}), and therefore provide a natural starting point:
\begin{assumption}\label{assumption:rank}
    Let $U_i\sim\text{Uniform}[0,1]$ be the latent rank.
    Then, $X_i(c)$ can be written as $X_i(c) = Q_c(U_i), c\in\{l,h\}$, where $Q_c$ is a strictly increasing quantile function of $X_i(c)$.
\end{assumption}
\begin{remark}
    When the score variable is invariant to the threshold (i.e., $X_i(c) = X_i$ for all $c$), rank invariance holds trivially. For instance, in educational settings where eligibility is determined by predetermined or inelastic characteristics---such as parental income that is difficult to adjust in the short run---individual scores do not respond to changes in the cutoff, rendering rank invariance a natural condition.
\end{remark} 

Assumption \ref{assumption:rank} requires the existence of a policy-invariant latent rank $U_i$ such that $X_i(l)$ and $X_i(h)$ are monotone transformations of the same scalar heterogeneity. Thus, changing the threshold may shift or stretch the distribution of scores, but it does not reshuffle individuals’ ranks in that distribution. 
In an educational setting, the basic intuition is not implausible: students who perform relatively well under one threshold regime also tend to perform relatively well under the other regime, and the same is true for students who perform relatively poorly. 

Rank invariance may provide a useful approximation when realized scores consist primarily of persistent individual heterogeneity together with relatively small idiosyncratic shocks. For example, if test scores are largely determined by stable differences in academic ability and effort, while idiosyncratic test-day shocks have a thin-tailed distribution and are small relative to the cross-sectional dispersion in these persistent components, rank reversals across threshold regimes may be limited. In such settings, individuals' relative positions in the score distribution may be largely preserved.
Conditional on covariates, rank invariance may be more plausible.\footnote{All subsequent identification arguments extend directly to this setting by interpreting the relevant distributions, quantiles, and causal effects as conditional on those covariates. The resulting conditional effects can then be averaged over the covariate distribution at the relevant cutoff to recover the unconditional effects.}


Let $F_c$ denote the distribution function of $X_i(c)$. 
Define
\begin{align*}
    R_i^c = F_c(X_i(c)) \quad\text{and}\quad u_c = F_c(c).
\end{align*}
Then, the second term of the right-hand side of \eqref{eq:second-l} can be rewritten as
\begin{align*}
    \E{Y_i(h,X_i(h),0)\mid X_i(l)=l} 
    = \E{Y_i(h,X_i(h),0)\mid R_i^l=u_l}.
\end{align*}
where the equality uses Assumption \ref{assumption:cont}(ii).
Here, the rank-invariance assumption implies $R_i^l = U_i = R_i^h$ almost surely, which in turn provides
\begin{align}
    \E{Y_i(h,X_i(h),0)\mid R_i^l=u_l} 
    = \E{Y_i(h,X_i(h),0) \mid R_i^h = u_l}.\label{eq:key}
\end{align}
Finally, we obtain that
\begin{align*}
    \E{Y_i(h,X_i(h),0) \mid R_i^h = u_l} &= \E{Y_i(h,X_i(h),0) \mid X_i(h) = Q_h(u_l), C_i = h}\\
    &= \E{Y_i \mid X_i = Q_h(u_l), C_i = h},
\end{align*}
where the last equality uses $F_c(c) = \P{X_i(c) \leq c} = \P{D_i(c) = 0}$ so that $\P{D_i(l) = 1} > \P{D_i(h) = 1} \Leftrightarrow F_l(l)<F_h(h) \Leftrightarrow u_l<u_h$. 
This implies that individuals at rank $u_l$ under the higher-threshold policy regime do not receive treatment.

Under Assumption \ref{assumption:rct}, $F_c$ is identified for both $c\in\{l,h\}$, and so is $Q_c$. Hence, the last term of the previous math display is identified, which in turn implies the identification of \eqref{eq:second-l}. An analogous argument applies $\theta_{h,1}(h)$, so that we obtain the following identification results. The proofs are deferred to Appendix \ref{sec:proof}.
\begin{prop}\label{prop:ident}
    Suppose the setup introduced in Section \ref{subsec:setup}.
    Suppose also that Assumptions \ref{assumption:rct}--\ref{assumption:cont} hold true.
    Let $F_c^{\mathtt{obs}}(x) = \P{X_i \leq x \mid C_i = c}$ denote the distribution function of observed score.
    Then the average treatment-assignment effect at the threshold is identified as, for $c\in\{l,h\}$,
    \begin{align*}
        \tau_c(c)
        =
            \lim\limits_{v\downarrow F_c^{\mathtt{obs}}(c)}\E{Y_i\mid F_c^{\mathtt{obs}}(X_i)=v, C_i=c} - \lim\limits_{v\uparrow F_c^{\mathtt{obs}}(c)}\E{Y_i \mid F_c^{\mathtt{obs}}(X_i)=v, C_i=c}
    \end{align*}
    provided that $0<F_c(c)<1$. If $F_c(c)\in\{0,1\}$, then $\tau_c(c)$ is not identified in general.

    In addition, suppose that Assumptions \ref{assumption:propensity}--\ref{assumption:rank} hold.
    Then the average incentive-response effect at the threshold is identified as
    \begin{align*}
        \theta_{l,0}(l)&=\lim\limits_{v\uparrow F_l^{\mathtt{obs}}(l)}\E{Y_i\mid F_l^{\mathtt{obs}}(X_i)=v, C_i=l} - \E{Y_i \mid F_h^{\mathtt{obs}}(X_i)=F_l^{\mathtt{obs}}(l), C_i=h}
    \end{align*}
    provided that $0<F_l(l)$.
    Similarly,
    \begin{align*}
        \theta_{h,1}(h)&=\E{Y_i\mid F_l^{\mathtt{obs}}(X_i)=F_h^{\mathtt{obs}}(h), C_i=l} - \lim\limits_{v\downarrow F_h^{\mathtt{obs}}(h)}\E{Y_i \mid F_h^{\mathtt{obs}}(X_i)=v, C_i=h}
    \end{align*}
    provided that $F_h(h)<1$.
    If $F_l(l)=0$, $\theta_{l,0}(l)$ is not identified in general. 
    If $F_h(h)=1$, $\theta_{h,1}(h)$ is not identified in general. 
\end{prop}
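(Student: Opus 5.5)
The plan has three parts: translate observed-data quantities into potential-outcome quantities, establish the RD formula for $\tau_c(c)$ in rank coordinates, and then redo the rank-invariance argument from the main text in a form that shows the stated formulas hold.

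\textbf{Step 1: observed versus potential quantities.} By Assumption \ref{assumption:rct}, the conditional law of $(X_i,Y_i)$ given $C_i=c$ equals the law of $(X_i(c),Y_i(c,X_i(c),D_i(c)))$. In particular $F_c^{\mathtt{obs}}=F_c$. By Assumption \ref{assumption:cont}(ii), $F_c$ is continuous and strictly increasing on $\mathcal X$. Hence conditioning on $F_c^{\mathtt{obs}}(X_i)=v$ given $C_i=c$ is the same as conditioning on $X_i(c)=Q_c(v)$. Moreover, $v\uparrow F_c(c)$ (resp.\ $\downarrow$) corresponds to $x=Q_c(v)\uparrow c$ (resp.\ $\downarrow$), with $Q_c$ continuous.

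\textbf{Step 2: the formula for $\tau_c(c)$.} For $x<c$ we have $D_i(c)=0$, so $\E{Y_i\mid X_i=x,C_i=c}=\E{Y_i(c,x,0)\mid X_i(c)=x}$. For $x\ge c$ the same holds with $d=1$. Continuity in Assumption \ref{assumption:cont}(i) then gives the two one-sided limits $\E{Y_i(c,c,d)\mid X_i(c)=c}$ for $d=1$ and $d=0$. Their difference is $\tau_c(c)$.

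The requirement $0<F_c(c)<1$ ensures that $c$ lies in the interior of the support, so both one-sided limits exist in the data. For non-identification when $F_c(c)\in\{0,1\}$, I will build two data-generating processes that agree on the observed law but differ in $\tau_c(c)$. Suppose, say, $F_c(c)=0$, so everyone is treated. Then $Y_i(c,x,0)$ never enters the observed data, and shifting it by a constant changes $\tau_c(c)$ while leaving the observables and all assumptions intact.

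\textbf{Step 3: the incentive-response effects.} For $\theta_{l,0}(l)$, follow the display \eqref{eq:second-l}. Since $X_i(l)=l$ gives $D_i(l)=1$, I will use continuity from the left to write the first term as $\E{Y_i(l,l,0)\mid X_i(l)=l}$, which is the stated left limit. For the second term, rank invariance gives $\{X_i(l)=l\}=\{U_i=u_l\}=\{X_i(h)=Q_h(u_l)\}$, using strict monotonicity of $Q_c$. Assumption \ref{assumption:propensity} gives $u_l<u_h$, so $Q_h(u_l)<h$ and $D_i(h)=0$ there. Consequently $Y_i(h,X_i(h),0)$ equals the observed $Y_i$ under $C_i=h$, and the term equals $\E{Y_i\mid F_h^{\mathtt{obs}}(X_i)=F_l^{\mathtt{obs}}(l),C_i=h}$. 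The positivity condition $F_l(l)>0$ puts $Q_h(u_l)$ in the support, so the conditional expectation is well defined; a boundary case can be handled by continuity. The argument for $\theta_{h,1}(h)$ is symmetric: rank $u_h>u_l$ gives $Q_l(u_h)>l$, so $D_i(l)=1$, and the limit from above at $h$ identifies the treated term.

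\textbf{Main obstacle.} The delicate point is the measure-zero conditioning. Equalities such as $\E{\cdot\mid X_i(l)=l}=\E{\cdot\mid U_i=u_l}=\E{\cdot\mid X_i(h)=Q_h(u_l)}$ must be justified as versions of conditional expectations that are pinned down at a specific point. I would handle this by the continuity in Assumption \ref{assumption:cont}(i), choosing the continuous versions. I would also note that the events coincide as sigma-fields generated by $X_i(l)$, $U_i$ and $X_i(h)$, which are bijective transforms of one another.

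For non-identification when $F_l(l)=0$, the first term involves a limit from the left of $l$, which lies outside the support. The same construction as in Step 2 applies: perturb $Y_i(l,x,0)$, which is never observed.
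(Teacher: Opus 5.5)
Your proposal is correct and follows the paper's route. You identify $F_c$ by randomization, pass between rank and score conditioning through the continuous, strictly increasing $F_c$, and obtain the RD terms from one-sided continuity. You then use rank invariance together with $u_l<u_h$ to send the event $X_i(l)=l$ to the untreated point $X_i(h)=Q_h(u_l)$ under regime $h$.

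The differences are minor. You argue in score coordinates rather than rank coordinates. You also add a constant-shift construction on never-observed potential outcomes for the non-identification claims, which the paper states without proof.
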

\begin{remark}
    As is clear from the preceding discussion, the key condition for identification is \eqref{eq:key}. In this sense, rank invariance can be viewed as a transparent sufficient condition for identification, and the result therefore applies more broadly than under rank invariance alone. 
\end{remark}
\begin{remark}\label{rem:rank-other}
    Related to the previous remark, one benefit of the stronger rank-invariance condition, rather than \eqref{eq:key}, is its greater identifying power. In particular, rank invariance implies that analogues of the equalities in \eqref{eq:key} hold at every rank level, rather than only at the cutoff ranks $u_l$. Consequently, the incentive-response effect is identified over the maximal rank regions permitted by the treatment-assignment rule: $\tilde \theta_0(u) := \E{Y_i(l,X_i(l),0)-Y_i(h,X_i(h),0)\mid U_i=u}$ is identified for $u\leq u_l$. Analogously, $\tilde \theta_1(u) := \E{Y_i(l,X_i(l),1)-Y_i(h,X_i(h),1)\mid U_i=u}$ is identified for $u\geq u_h$.
\end{remark}

The average treatment-assignment effect at the threshold $\tau_c(c)$ is precisely the same as the conventional RD treatment effect. 
This follows directly from
\begin{align*}
    \tau_c(c)
    &=
    \lim\limits_{v\downarrow F_c^{\mathtt{obs}}(c)}\E{Y_i\mid F_c^{\mathtt{obs}}(X_i)=v, C_i=c} - \lim\limits_{v\uparrow F_c^{\mathtt{obs}}(c)}\E{Y_i \mid F_c^{\mathtt{obs}}(X_i)=v, C_i=c}\\
    &=
    \lim\limits_{x\downarrow c}\E{Y_i\mid X_i=x, C_i=c} - \lim\limits_{x\uparrow c}\E{Y_i \mid X_i=x, C_i=c}
\end{align*}

The two effects, average treatment-assignment effect and average incentive-response effect, provide a more comprehensive picture of the threshold policy. 
More specifically, comparing these effects sheds light on the relative importance of the two channels.
For example, a positive RD treatment effect $\tau_l(l)$ may be accompanied by a negative and sufficiently large (in absolute value) incentive-response effect $\theta_{l,0}(l)$. 
In that case, the policy with threshold $l$ may reduce outcomes relative to the policy with threshold $h$, despite a positive RD treatment effect.

More precisely, we can decompose the ``total treatment effect'' into these two channels.
We define the individual total treatment effect by $Y_i(l,X_i(l),D_i(l))-Y_i(h,X_i(h),D_i(h))$.
Then the local average total treatment effect is given by
\begin{align*}
    \Delta_c(x) = \E{Y_i(l,X_i(l),D_i(l))-Y_i(h,X_i(h),D_i(h)) \mid X_i(c) = x},
\end{align*}
which is identified for all $x$ under rank invariance.
At the threshold $l$, we can decompose $\Delta_l(l)$ as
\begin{align}
    \Delta_l(l)
    &=\E{Y_i(l,X_i(l),D_i(l))-Y_i(l,X_i(l),D_i(h))\mid X_i(l) = l}\notag\\
    &\quad\qquad+ \E{Y_i(l,X_i(l),D_i(h))-Y_i(h,X_i(h),D_i(h))\mid X_i(l) = l}\notag\\
    &=\E{Y_i(l,l,1)-Y_i(l,l,0)\mid X_i(l) = l}
    + \E{Y_i(l,l,0)-Y_i(h,X_i(h),0)\mid X_i(l) = l}\notag\\
    &= \underbrace{\tau_l(l)}_{\text{RD treatment-assignment effect at }X_i(l)=l} 
    + \underbrace{\theta_{l,0}(l)}_{\text{effect operating through the incentive change from $h$ to $l$}},\label{eq:decomp}
\end{align}
where the second equality uses Assumptions \ref{assumption:propensity} and \ref{assumption:rank}. Similarly, at the higher threshold $h$, $\lim_{x\uparrow h}\Delta_h(x)$ is decomposed as
    \begin{align*}
        \lim_{x\uparrow h}\Delta_h(x)
        = \E{Y_i(l,X_i(l),1)-Y_i(h,h,0) \mid X_i(h) = h}
        = \tau_h(h) + \theta_{h,1}(h),
    \end{align*}
These decompositions permit a direct comparison of the two channels, yielding a richer analysis of the consequences of threshold policies and more informative policy implications.

\subsection{Relaxing Rank-Invariance Assumption}\label{subsec:partial}

Although rank invariance provides a natural benchmark, formally justifying its empirical validity may be challenging. Empirical researchers may therefore benefit from a sensitivity-analysis framework or a bounding approach. This subsection develops such a framework.
To simplify the discussion and notation, we focus on identification at the lower threshold, namely, $\theta_{l,0}(l)$, since an analogous argument applies to $\theta_{h,1}(h)$.

To proceed, we introduce the following mild continuity condition:
\begin{assumption}\label{assumption:cont-rank}\quad
    \begin{description}
        \item[(i)] $\E{Y_i(h,X_i(h), 0) \mid R_i^l = u, R_i^h = r}$ is continuous, relative to $\mathrm{supp}(R_i^l,R_i^h)$, at every point $(u_l, r)\in \mathrm{supp}(R_i^l,R_i^h)$.

        \item[(ii)] $u\mapsto\P{R_i^h \in A \mid R_i^l = u}$ is weakly continuous at $u=u_l$, that is,
        \begin{align*}
            \lim_{u\to u_l}\E{\phi(R_i^h) \mid R_i^l = u} = \E{\phi(R_i^h) \mid R_i^l = u_l}
        \end{align*}
        for every bounded continuous map $\phi:[0,1]\to\mathbb R$.
    \end{description}
\end{assumption}
Part (i) is a standard continuity condition. Part (ii) merits some explanation.
Because $R_i^l$ is continuously distributed, the event $R_i^l=u_l$ has probability zero, and hence the conditional distribution of $R_i^h$ given $R_i^l=u_l$ is not uniquely determined by the joint distribution of $(R_i^l,R_i^h)$. 
Assumption~\ref{assumption:cont-rank} rules out irregular choices of versions at this probability-zero event by requiring the conditional distribution at $u_l$ to coincide with the weak limit of the conditional distributions in its neighborhood. 
Without this restriction, the same joint distribution of $(R_i^l,R_i^h)$ could be associated with different versions of the conditional distribution at $u_l$, leading to different values of conditional expectations of functions of $R_i^h$ given $R_i^l=u_l$. 
The assumption therefore ensures that the conditional objects used below are well defined as local limits rather than depending on an arbitrary choice of version.

\subsubsection{Mean-Rank Invariance and Rank Sufficiency}
We now introduce the following condition as an alternative to rank invariance:

\begin{assumption}\label{assumption:mean-rank}
    \quad
    \begin{description}
        \item[(i)] $\E{R_i^h \mid R_i^l = u_l} = u_l$.

        \item[(ii)] $\E{Y_i(h,X_i(h),0) \mid R_i^l = u_l, R_i^h} = \E{Y_i(h,X_i(h),0) \mid R_i^h}$.
    \end{description}
\end{assumption}

Assumption \ref{assumption:mean-rank} is weaker than the rank-invariance assumption (Assumption \ref{assumption:rank}).
Under rank invariance, $R_i^h=R_i^l$ almost surely, so both parts of Assumption \ref{assumption:mean-rank} hold immediately. Assumption \ref{assumption:mean-rank}, however, allows individuals' ranks to change across the two policy regimes.

Part (i) imposes a (local) \textit{mean-rank invariance} condition. 
Individuals whose rank under the lower-threshold regime is $u_l$ may move to higher or lower ranks under the higher-threshold regime, but the condition requires these counterfactual ranks to be ``centered'' around $u_l$ on average.

Part (ii) is a \textit{rank sufficiency} restriction. Once an individual's rank under the higher-threshold regime is given, the rank under the lower-threshold regime contains no additional information about the mean potential outcome that would arise under the higher-threshold regime without treatment. 

Loosely speaking, these conditions are related in spirit to the rank-similarity assumption \citep{Chernozhukov_Hansen:2005}. Rank similarity allows individual ranks to vary across regimes while limiting systematic differences in their distribution across regimes. Our conditions impose a related, though distinct, restriction on cross-regime rank variation: mean-rank invariance rules out a systematic shift in the mean counterfactual rank at the cutoff, while rank sufficiency rules out systematic association between the remaining rank variation and the counterfactual outcome.

In the context of our empirical application discussed in Section~\ref{sec:motiv}, part (i) requires that among students located at the scholarship-cutoff rank $u_l$ under the lower-threshold regime, their expected rank under the higher-threshold regime is again $u_l$. 
While exact rank preservation may be violated in practice due to idiosyncratic exam shocks and heterogeneous score adjustments to the cutoff change, individual ranks are typically not reshuffled at random; rather, they exhibit substantial, albeit imperfect, persistence across settings (e.g., \citealp{Breit2024}). 
In such environments, mean-rank invariance may be plausible.
Part (ii) posits that, conditional on students' rank under the higher-threshold regime, their rank under the lower-threshold regime provides no additional information about their mean subsequent outcome under the higher-threshold regime in the absence of scholarship receipt. This may be reasonable when the no-treatment (i.e., higher-threshold) rank captures the persistent characteristics, such as academic preparedness and ability, that are relevant for subsequent academic outcomes, while the discrepancy between the two ranks primarily reflects idiosyncratic exam-day noise and regime-specific score adjustments that have little bearing on subsequent academic performance.

\begin{remark}
    There are many other potentially useful relaxations of the strict rank-invariance assumption. Related restrictions have been proposed in the quantile-treatment-effects literature and, more broadly, in the literature on the distribution of treatment effects. For example, \cite{Frandsen_Lefgren:2021} introduce a stochastic increasingness assumption as an alternative relaxation of rank invariance in the context of bounding the  treatment-effects distribution. Investigating the identifying implications and empirical relevance of such restrictions, or of more general restrictions on the dependence between counterfactual ranks, in our setting would be an interesting direction for future research.
\end{remark}

\subsubsection{Partial Identification}
We now establish identification under Assumption \ref{assumption:mean-rank}.
Recall that our main task is to identify the counterfactual mean function $\mathbb{E}[Y_i(h,X_i(h),0)\mid R_i^l=u_l]$ using the conditional mean function under the higher regime,
\begin{align*}
    m_h(r) := \mathbb{E}[Y_i(h,X_i(h),0) \mid R_i^h = r].
\end{align*}
An important issue, which does not arise under Assumptions~\ref{assumption:propensity}--\ref{assumption:rank}, is that $m_h(r)$ may not be identified over the entire range of ranks relevant for recovering $\mathbb{E}[Y_i(h,X_i(h),0)\mid R_i^l=u_l]$.
In particular, we need $m_h(r)$ to be identified over the support of $R_i^h$ conditional on $R_i^l=u_l$.
We therefore restrict our attention to the following case:

\begin{assumption}\label{assumption:high}
    Let $\mathcal S = \mathrm{supp} (R_i^h \mid R_i^l = u_l)$.
    Then $m_h(r)$ is identified over a known closed interval $\bar{\mathcal S}\subseteq[0,1]$ such that $\mathcal S \subseteq \bar{\mathcal S}$.
\end{assumption}
\begin{remark}\label{rem:support}
    In applications with $h=\infty$, Assumption \ref{assumption:high} holds trivially with $\bar{\mathcal S}=[0,1]$. 
\end{remark}

Without some extrapolation assumptions (e.g., parametric functional form assumption), this condition requires $\mathcal S \subseteq \bar{\mathcal S} \subseteq[0,u_h)$. That is, no individual at rank $u_l$ under the lower-threshold regime receives treatment under the higher-threshold regime. In particular,
\begin{align}
    \P{D_i(h)=0\mid R_i^l=u_l}=1.\label{eq:no-treat}
\end{align}
This support condition may be plausible when the two cutoff ranks, $u_l$ and $u_h$, are sufficiently far apart. In that case, moderate cross-regime rank movements can be accommodated without causing individuals located at the lower cutoff under regime $l$ to receive treatment under regime $h$.
We emphasize that both \eqref{eq:no-treat} and Assumption~\ref{assumption:high} are automatically satisfied in the important special case where $h=\infty$, as noted in Remark~\ref{rem:support}.

We are now in a position to establish identification.
Under Assumption \ref{assumption:mean-rank}(ii), 
\begin{align*}
    \E{Y_i(h,X_i(h),0) \mid R_i^l = u_l}=\E{m_h(R_i^h) \mid R_i^l = u_l}.
\end{align*}
Take an arbitraty convex function $g$ such that $g(r) \leq m_h(r)$. Then it follows that
\begin{align*}
    \E{m_h(R_i^h) \mid R_i^l = u_l} \geq \E{g(R_i^h) \mid R_i^l = u_l}
    \geq g\left(\E{R_i^h \mid R_i^l = u_l}\right) 
    = g(u_l),
\end{align*}
where the first inequality follows by definition of $g$, the second inequality uses Jensen's inequality, and the last equality uses Assumption \ref{assumption:mean-rank}(i).
Hence, the lower bound is given by
\begin{align*}
    \E{Y_i(h,X_i(h), 0) \mid R_i^l = u_l} 
    \geq 
    \sup_{g \in \mathcal F_{\mathtt{conv}}^{-}} g(u_l),
\end{align*}
where $\mathcal F_{\mathtt{conv}}^{-} = \left\{f: f \text{ is a convex function on } \bar{\mathcal S} \text{ such that } f \leq m_h\right\}$.
That is, the lower bound is given by the value of the convex envelope of $m_h$ at $u_l$.
A similar reasoning yields the upper bound as well:
\begin{align*}
    \E{Y_i(h,X_i(h), 0) \mid R_i^l = u_l} 
    \leq 
    \inf_{g \in \mathcal F_{\mathtt{conc}}^{+}} g(u_l),
\end{align*}
where $\mathcal F_{\mathtt{conc}}^{+} = \left\{f: f \text{ is a concave function on } \bar{\mathcal S} \text{ such that } f \geq m_h\right\}$.

Therefore, we can obtain the valid lower and upper bounds on $\mathbb{E}[Y_i(h,X_i(h),0)\mid R_i^l=u_l]$. The following proposition shows that the bounds are sharp as well.
\begin{prop}\label{prop:bounds}
    Under Assumptions \ref{assumption:rct}--\ref{assumption:propensity}, \ref{assumption:cont-rank}--\ref{assumption:high}, the identified set for $\theta_{l,0}(l)$ is given by
    \begin{align*}
        \bigg[
        &\lim\limits_{v\uparrow F_l^{\mathtt{obs}}(l)}\E{Y_i\mid F_l^{\mathtt{obs}}(X_i)=v, C_i=l} - \inf_{g \in \mathcal F_{\mathtt{conc}}^{+}} g(u_l),\\
        &\qquad \lim\limits_{v\uparrow F_l^{\mathtt{obs}}(l)}\E{Y_i\mid F_l^{\mathtt{obs}}(X_i)=v, C_i=l} - \sup_{g \in \mathcal F_{\mathtt{conv}}^{-}} g(u_l)
        \bigg],
    \end{align*}
    provided that $F_l(l) > 0$.
    Furthermore, this is the sharp identified set for $\theta_{l,0}(l)$.
\end{prop}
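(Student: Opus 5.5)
The proof has two parts: validity of the stated interval, and sharpness.

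\emph{Validity.} First, the term $\lim_{v\uparrow F_l^{\mathtt{obs}}(l)}\E{Y_i\mid F_l^{\mathtt{obs}}(X_i)=v, C_i=l}$ equals $\E{Y_i(l,X_i(l),0)\mid X_i(l)=l}$. This uses only Assumptions \ref{assumption:rct}--\ref{assumption:cont} and $F_l(l)>0$, exactly as in the first part of Proposition \ref{prop:ident}; under Assumption \ref{assumption:rct}, $F_l^{\mathtt{obs}}=F_l$. Second, Assumption \ref{assumption:cont}(ii) gives $\E{Y_i(h,X_i(h),0)\mid X_i(l)=l}=\E{Y_i(h,X_i(h),0)\mid R_i^l=u_l}$. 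Assumption \ref{assumption:cont-rank} makes this conditional expectation, and its disintegration over $R_i^h$, well defined as a local limit. Assumption \ref{assumption:mean-rank}(ii) then gives $\E{m_h(R_i^h)\mid R_i^l=u_l}$. The Jensen arguments in the text yield the two envelope bounds, with $g$ evaluated only on $\mathcal S\subseteq\bar{\mathcal S}$, where $m_h$ is identified by Assumption \ref{assumption:high}. Subtracting these bounds from the identified first term gives the displayed interval.

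\emph{Sharpness.} Write $\underline m$ for the convex envelope and $\overline m$ for the concave envelope of $m_h$ on $\bar{\mathcal S}$, both evaluated at $u_l$. The plan is, for every $t\in[\underline m,\overline m]$, to build a joint distribution of $(C_i,X_i(l),X_i(h),\{Y_i(c,x,d)\})$ with three properties:
\begin{itemize}
\item it satisfies all maintained assumptions;
\item it reproduces the observed distribution of $(C_i,X_i,Y_i)$;
\item it gives $\E{Y_i(h,X_i(h),0)\mid R_i^l=u_l}=t$.
\end{itemize}

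Keep the marginal laws of $(X_i(c),Y_i)$ within each regime as observed. Keep $R_i^h$ given $R_i^l=u$ degenerate at $u$ for $u$ outside a shrinking neighbourhood of $u_l$. Only the cross-regime coupling of ranks is modified, and that coupling is unrestricted by the data.

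By Carathéodory's theorem in one dimension, the value of the convex envelope at $u_l$ is attained, or approximated, by a two-point mixture $\lambda m_h(r_1)+(1-\lambda)m_h(r_2)$ with $\lambda r_1+(1-\lambda)r_2=u_l$ and $r_1,r_2\in\bar{\mathcal S}$; the concave envelope is handled the same way. Mixing the lower and upper two-point (or four-point) laws gives a conditional law of $R_i^h$ given $R_i^l=u_l$ with mean $u_l$ and $\E{m_h(R_i^h)\mid R_i^l=u_l}=t$. This law must also sit on the support allowed in Assumption \ref{assumption:high}, so the construction sets $\mathcal S$ to be its support, inside $\bar{\mathcal S}$.

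Let $Y_i(h,X_i(h),0)$ depend on $R_i^h$ only through its observed conditional law. Rank sufficiency then holds by construction. Since $Y_i(l,\cdot)$ is left unchanged, the treatment-assignment side and the observed data are unaffected.

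\emph{Main obstacle.} The hardest step is reconciling this construction with the regularity conditions. The joint law of $(R_i^l,R_i^h)$ must have uniform marginals. It must also satisfy the continuity in Assumption \ref{assumption:cont-rank}, which forbids defining a special conditional law only on the null event $R_i^l=u_l$. The first attempt will be a smooth interpolation. For $|u-u_l|<\varepsilon$, let the conditional law of $R_i^h$ be a continuous-in-$u$ family of discrete laws on the points $r_j+(u-u_l)$, with weights moving continuously back to a point mass at $u$. Then apply a measure-preserving correction so that the $R_i^h$ marginal remains uniform; the local perturbation is a mass-preserving rearrangement inside a neighbourhood of the support points.

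If this fails, the fallback is to prove sharpness of the closure, treating the endpoints as limits. Attaining the endpoints, rather than only the open interval, requires that the envelope be attained, which holds because $m_h$ is continuous on the compact set $\bar{\mathcal S}$.
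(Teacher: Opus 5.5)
Your validity argument and your reduction of sharpness are fine and match the paper. The reduction is: choose a law for $R_i^h$ given $R_i^l=u_l$, supported in $\bar{\mathcal S}$, with mean $u_l$ and $\int m_h\,dP=t$, and let $Y_i(h,X_i(h),0)$ depend only on $R_i^h$.

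\textbf{The gap.} It is the step you flag as the main obstacle: you never produce an admissible coupling, and the plan you sketch cannot work as stated. Suppose $R_i^h=R_i^l$ whenever $R_i^l$ lies outside a small neighbourhood $N$ of $u_l$. Then $\{R_i^l\notin N\}\subseteq\{R_i^h\notin N\}$, and uniformity of $R_i^h$ forces $\mathbb{P}(R_i^h\notin N\mid R_i^l=u)=0$ for almost every $u\in N$. But Assumption~\ref{assumption:cont-rank}(ii) requires these kernels to converge weakly, as $u\to u_l$, to your target law. That law puts mass $p_j>0$ on atoms $r_j$ at a fixed distance from $u_l$. By the portmanteau theorem, for $N$ small the kernels must then give probability at least $p_j/2$ to an open set around $r_j$ disjoint from $N$, for all $u$ near $u_l$. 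This is a contradiction.

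So mass must flow back from $R_i^l$ near each $r_j$ to $R_i^h$ near $u_l$; the coupling has to be altered away from $u_l$ too. An unspecified ``measure-preserving correction'' applied afterwards would move the atoms, unless it fixes every $r_j$ and is continuous there. That would break the exact constraints $\mathbb{E}[R_i^h\mid R_i^l=u_l]=u_l$ and $\mathbb{E}[m_h(R_i^h)\mid R_i^l=u_l]=t$. Assumption~\ref{assumption:mean-rank}(i) is an exact equality, so approximately correct models are not admissible.

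For the same reason your fallback (sharpness of the closure) does not help: the difficulty concerns interior points as much as endpoints. Also, the shifted atoms $r_j+(u-u_l)$ leave $[0,1]$ when $r_j\in\{0,1\}$. This genuinely occurs, e.g.\ when $\bar{\mathcal S}=[0,1]$ with $h=\infty$ and the optimal chord runs to the boundary.

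\textbf{The missing idea} is an exact global embedding. For each atom $r_j$ the paper takes a Lebesgue-measure-preserving map $T_j:[0,1]\to[0,1]$ with $T_j(u_l)=r_j$, continuous at $u_l$. For $r_j\in(0,1)$ this is $T_j(u)=\{r_j+u_l-u\}$, with $\{\cdot\}$ the fractional part; for $r_j\in\{0,1\}$ it is a tent map based at $u_l$. The paper then draws $J_i$ independently of $R_i^l\sim\mathrm{Unif}[0,1]$ with $\mathbb{P}(J_i=j)=p_j$, and sets $R_i^h=T_{J_i}(R_i^l)$. Then $R_i^h$ is uniform, and $R_i^h$ given $R_i^l=u$ has law $\sum_j p_j\delta_{T_j(u)}$. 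This equals the target exactly at $u_l$ and is weakly continuous there.

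Your local idea can also be rescued by making the perturbation symmetric:
\begin{itemize}
\item for $u=u_l+s$ near $u_l$, send weight $p_j\phi(s)$ to $r_j+s$;
\item for $v=r_j+s$, send the same weight back to $u_l+s$;
\item keep the identity everywhere else.
\end{itemize}
Take $\phi(0)=1$, $0\le\phi\le 1$, $\phi$ vanishing at the edges of disjoint neighbourhoods, leave atoms at $u_l$ in place, and treat atoms at $0$ or $1$ separately. Both marginals are then uniform by symmetry.

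Once the coupling exists, you still need potential outcomes defined off the realized score, so that Assumptions~\ref{assumption:cont}(i) and \ref{assumption:cont-rank}(i) can be checked. The paper uses $Y_i(c,x,d)=Z_{i,c,d}+\mu^0_{c,d}(x)-\mu^0_{c,d}(X_i(c))$, with $(Z_{i,h,0},Z_{i,h,1})$ drawn given $R_i^h$ only.
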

\begin{remark}
    A symmetric argument yields an analogous identified set for $\theta_{h,1}(h)$ after appropriately modifying Assumptions~\ref{assumption:mean-rank}--\ref{assumption:high}. We note, however, that when identifying $\theta_{h,1}(h)$, setting $h=\infty$ does not guarantee the counterpart of Assumption~\ref{assumption:high}, namely, identification of $m_l(r):=\mathbb{E}[Y_i(l,X_i(l),1)\mid R_i^l=r]$, because this function concerns the mean potential outcome under treatment. To obtain an analogous simple sufficient condition, an experimenter may instead include a full-treatment arm, corresponding to $l=-\infty$.
\end{remark}

Proposition~\ref{prop:bounds} characterizes the sharp identified set for $\theta_{l,0}(l)$ under the mean-rank-invariance and rank-sufficiency assumptions, together with Assumption~\ref{assumption:high}, which ensures identification of $m_h$. Because the treatment-assignment effect $\tau_l(l)$ remains point identified, an analogous decomposition can be carried out provided that \eqref{eq:no-treat} holds. In particular, the second equality in \eqref{eq:decomp} continues to hold under \eqref{eq:no-treat}. Thus, the same interpretation of the decomposition and comparison between $\tau_l(l)$ and $\theta_{l,0}(l)$ remain valid, although $\theta_{l,0}(l)$ is now only partially identified.

The identified set is easy to compute.
Define, for $a\le u_l \le b$,
\begin{align*}
    \Xi(a,b \mid m_h, u_l) := 
    \begin{cases}
        \frac{b-u_l}{b-a} m_h(a) + \left(1 -\frac{b-u_l}{b-a}\right)m_h(b) & (a<b),\\
        m_h(u_l) & (a=b).
    \end{cases}.
\end{align*}
For $a<b$, $\Xi(a,b\mid m_h,u_l)$ is the value at $u_l$ of the affine function whose graph is the chord connecting the two points $(a,m_h(a))$ and $(b,m_h(b))$ on the graph of $m_h$. The case $a=b=u_l$ corresponds to the degenerate chord consisting of the single point $(u_l,m_h(u_l))$.
Then, by a standard result from convex analysis (\citealp[Corollary 17.1.5]{Rockafellar70}), we have that
\begin{align*}
    \sup_{g \in \mathcal F_{\mathtt{conv}}^{-}} g(u_l) = \inf_{\substack{a,b\in\bar{\mathcal S}\\a\le u_l\le b}} \Xi(a,b \mid m_h, u_l)\quad\text{ and }\quad
    \inf_{g \in \mathcal F_{\mathtt{conc}}^{+}} g(u_l) = \sup_{\substack{a,b\in\bar{\mathcal S}\\a\le u_l\le b}} \Xi(a,b \mid m_h, u_l).
\end{align*}
That is, the value of the convex envelope of $m_h$ at $u_l$ can be computed as the infimum of the values at $u_l$ of all chords of $m_h$ whose endpoints bracket $u_l$. Analogously, the value of the concave envelope of $m_h$ at $u_l$ is obtained by taking the supremum over the same set of chords.

\section{Empirical Application}\label{sec:emp}

We illustrate the importance of disentangling the two effects of a threshold policy---the treatment-assignment channel and the incentive-response channel---using experimental data from \cite{Berry22}. Section \ref{sec:motiv} describes the experimental setting and its institutional background. The main text focuses on students who were in sixth grade in 2015, for whom the distinction between the two effects is particularly salient. Results for fifth-grade students and a comparison between the fifth- and sixth-grade samples are reported in Appendix~\ref{sec:addemp} of the Online Appendix.

In this application, students assigned to the scholarship arm face a threshold $l$ corresponding to the top 15 percent of scores on the June 2015 examination. Students assigned to the control arm face a threshold $h=\infty$, since no student in that arm can receive the award. We therefore focus on the treatment-assignment effect at the lower threshold, $\tau_l(l)$, and on the incentive-response effect $\theta_{l,0}(l)$, which are the identified quantities in this setting.
These quantities are identified from conditional expectations of the March 2016 test score given rank. Figure~\ref{fig} plots the corresponding conditional expectation functions. The solid line shows $\widehat{\mathbb{E}}[Y_i\mid U_i=u,C_i=l]$ whereas the dashed line shows $\widehat{\mathbb{E}}[Y_i\mid U_i=u,C_i=h]$.

\begin{figure}[t]
    \begin{center}
    \includegraphics[width=0.75\linewidth]{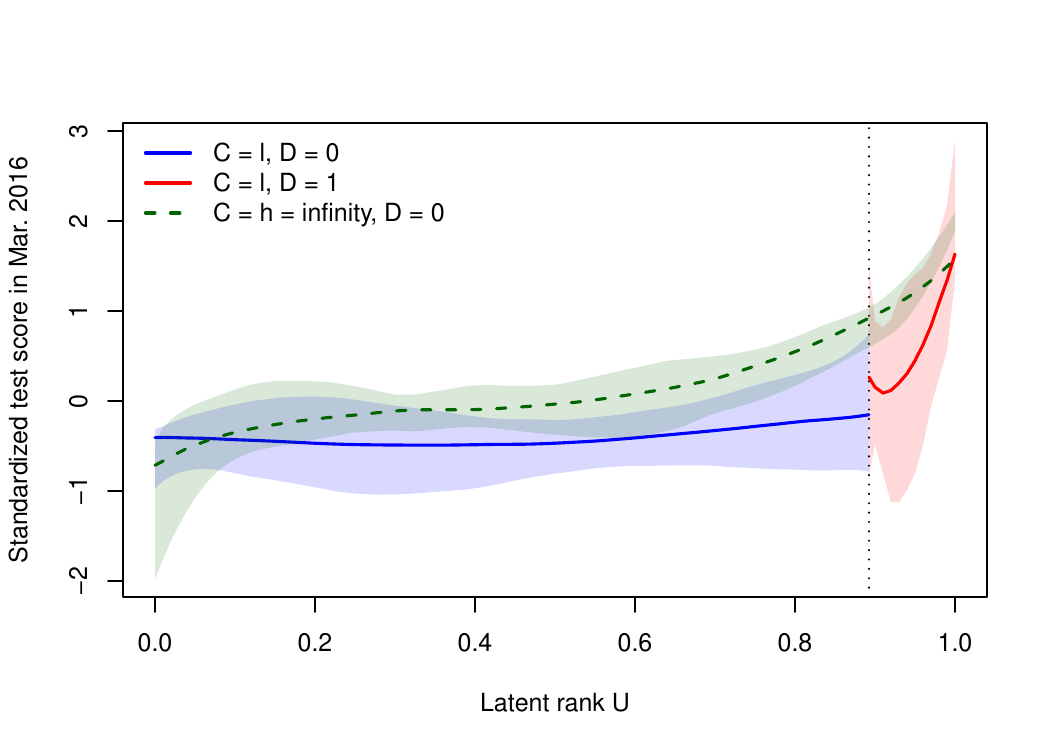}
    \caption{Expected Test Scores Conditional on the Rank}
    \label{fig}
    \end{center}

    \footnotesize
    \renewcommand{\baselineskip}{11pt}
    \textbf{Note:} The estimated IMSE-optimal bandwidths are used ($0.368$ for $C_i=l, D_i=0$; $0.058$ for $C_i=l, D_i=1$; $0.286$ for $C_i=h, D_i=0$). The shaded regions indicate the 95\% robust-bias corrected (RBC) confidence intervals.
    The robust-bias corrected standard error is clustered at school level. The cutoff on the latent-rank scale is approximately $0.89$. The top-15-percent criterion was computed among all students, including students in the control arm and those excluded from our analysis. The corresponding cutoff in our analysis sample therefore need not equal $0.85$.
\end{figure}

We first consider the treatment-assignment effect $\tau_l(l)$, which corresponds to the conventional RD estimand at the cutoff. The point estimates and $95\%$ robust bias-corrected confidence intervals are reported in the last two columns of the first row of Table~\ref{tab:te}. We consider two bandwidth choices: the IMSE-optimal bandwidth used in Figure~\ref{fig} and the MSE-optimal bandwidth for the RD estimand proposed by \cite{Imbens_Kalyanaraman:2011}.
Consistent with the discontinuity in Figure~\ref{fig}, the RD point estimate is positive. It is statistically significant when we use the RD-specific bandwidth. Taken at face value, the conventional RD estimand suggests that receiving the financial award may improve subsequent test performance for students near the cutoff.

However, the policy implications become more nuanced once we consider the incentive-response effect $\theta_{l,0}(l)$. 
Figure~\ref{fig} suggests that students in the scholarship arm exhibit lower subsequent test scores over much of the relevant rank distribution. 
Under rank invariance, this pattern implies a negative incentive-response effect at the cutoff: assignment to the threshold policy may lower subsequent achievement even holding scholarship receipt fixed.

\begin{table}[t]
    \begin{center}
    \scalebox{0.90}{
    \begin{tabular}{lccccc}\hline\hline
        \multirow{2}{*}{Rank $u$} & \multirow{2}{*}{$0.0$} & \multirow{2}{*}{$0.4$} & \multirow{2}{*}{$0.8$} & \multicolumn{2}{c}{$u_l \approx 0.89$}\\
         & & & & (IMSE BW) & (RD BW)\\\hline
        Treatment & -- & -- & -- & $0.418$ & $0.424$\\
        \,\,\, assignment & -- & -- & -- & [$-1.01, 1.73$] & [$0.60, 3.02$]\\
        Incentive & $0.308$ & $-0.391$ & $-0.782$ & $-1.073$ & $-1.109$\\
        \,\,\, response & [$-0.26, 1.41$] & [$-0.97, -0.05$] & [$-1.27, -0.08$] & [$-1.63, -0.05$] & [$-2.67, -1.32$]\\
         \multirow{2}{*}{Total} & $0.308$ & $-0.391$ & $-0.782$ & $-0.655$ & $-0.685$\\
         & [$-0.26, 1.41$] & [$-0.97, -0.05$] & [$-1.27, -0.08$] & [$-1.67, 0.71$] & [$-1.26, 0.89$]\\\hline
    \end{tabular}}
    \caption{Treatment Effects}
    \label{tab:te}
    \end{center}

    \footnotesize
    \renewcommand{\baselineskip}{11pt}
    \textbf{Note:} 
    The conventional point estimates and 95\% robust bias-corrected confidence intervals are reported. Because the confidence intervals incorporate bias correction, they are not necessarily centered on the corresponding point estimates.
    Except for the last column, the estimated IMSE bandwidths, reported in the note below Figure \ref{fig}, are used. In the last column (RD BW), the MSE-optimal bandwidth $0.082$ for RD estimator proposed in \cite{Imbens_Kalyanaraman:2011} is used.
    The robust-bias corrected standard error is clustered at school level. 
\end{table}

Table~\ref{tab:te} reports the estimation results under rank-invariance assumption.
The effects via two channels, $\tau_l(l)$ and $\theta_{l,0}(l)$, are presented in the final two columns.
At this rank, the estimated incentive-response effect is more than twice as large in absolute value as the treatment-assignment effect and is statistically significantly negative. 
Accordingly, the total effect at this rank $\Delta_l(l)$ is estimated to be negative.
Thus, the overall effect of the threshold-based scholarship policy may be negative, even though the conventional RD estimate is positive. 

Moreover, we find that the estimated incentive-response effects at other rank levels---which are identified under rank invariance (Remark~\ref{rem:rank-other})---are negative and statistically significant at $u=0.4$ and $u=0.8$. This finding suggests that the adverse effect observed at the cutoff may not be unique to the threshold but may extend over a broader range of ranks.

\begin{figure}[t]
    \begin{center}
    \includegraphics[width=0.75\linewidth]{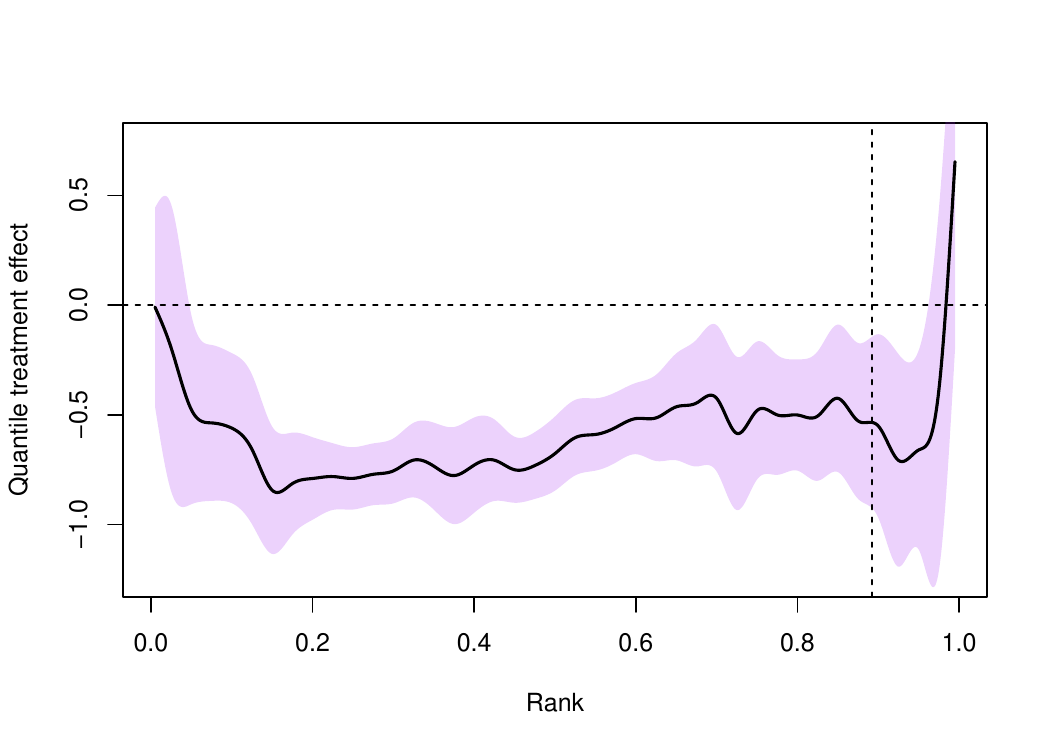}
    \caption{Quantile Treatment Effect of Threshold Assignment on Test Score in June 2015}
    \label{qte}
    \end{center}

    \footnotesize
    \renewcommand{\baselineskip}{11pt}
    \textbf{Note:} The shaded region indicates the 95\% bootstrap confidence band. 
\end{figure}

These patterns suggest that the merit-based scholarship program induces a behavioral response which negatively affect educational attainment.
For example, \cite{Berry22} find that students' motivation to study at the time of the June 2015 exam---measured by how strongly they agree with the statement ``I am motivated to study hard'' on a five-point scale---was decreased except for the initially high-performing students.
This may perhaps be due to the stringent top-15-percent cutoff.\footnote{Threshold-based incentive schemes may encourage effort when the performance target is attainable, but may have weak or adverse effects when the target is difficult to reach (\citealp{Leuven10}). More broadly, the effects of financial incentives can depend on individuals' capabilities, task difficulty, and intrinsic motivation; see \cite{CamererHogarth99}.}
Consistent with this observation, the quantile treatment effect of threshold assignment on the June 2015 test score $X_i$, i.e., $Q_l(u)-Q_h(u)$ is negative over nearly the entire distribution and becomes positive only at extremely high quantiles (Figure~\ref{qte}).

Taken together with the positive estimate of $\tau_l(l)$ and the larger negative estimate of $\theta_{l,0}(l)$, these findings suggest that the financial-aid program may have discouraged students over much of the score distribution, thereby reducing their June 2015 test scores and, potentially, their human-capital accumulation. This adverse incentive response may in turn have lowered subsequent academic achievement, including test scores in 2016, even though scholarship receipt itself appears to have had a modest positive effect on academic achievement.
If so, a merit-based implementation may reduce subsequent educational achievement relative to an otherwise comparable policy without a merit threshold. 

\begin{figure}[t]
    \begin{center}
    \includegraphics[width=0.75\linewidth]{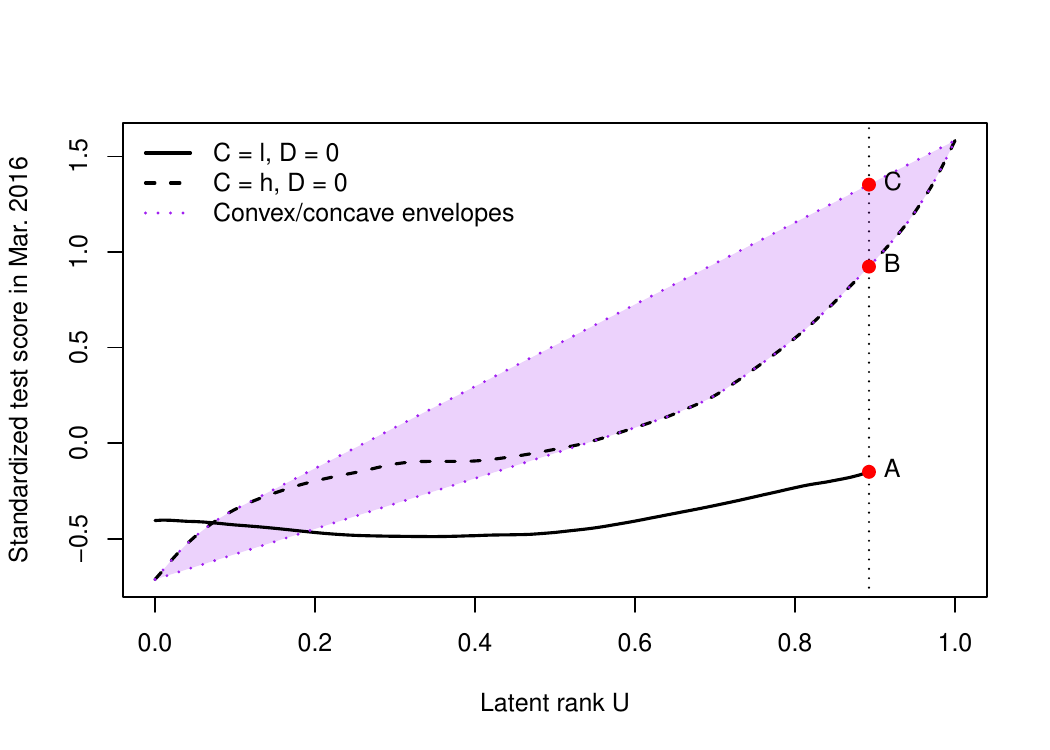}
    \caption{Envelopes and Bounds}
    \label{bounds}
    \end{center}

    \footnotesize
    \renewcommand{\baselineskip}{11pt}
    \textbf{Note:} The estimated IMSE-optimal bandwidths are used.
    The identified set is given by $[A-C, A-B]$.
\end{figure}

Finally, we compute the identified set under a weaker set of assumptions, namely, mean-rank invariance and rank sufficiency. Figure~\ref{bounds} reports the estimated convex and concave envelopes of $\mathbb{E}[Y_i\mid R_i^h=r,C_i=h]$, which yield the bounds $[A-C,A-B]=[-1.502,-1.073]$.
As shown in the figure, the lower convex envelope coincides with the observed regression function. Consequently, the upper bound coincides with the negative point estimate obtained under rank invariance, as reported in Table~\ref{tab:te}.

The bounds therefore allow for a more pessimistic conclusion. 
In particular, the true magnitude of this negative incentive-response effect may be larger in absolute value than the point estimate obtained under rank invariance.
Thus, these findings provide even stronger evidence than the point-identified results presented above that merit-based assignment may have important limitations as an incentive mechanism.

\section{Conclusion}\label{sec:concl}

This paper developed a framework for separately identifying the effects of threshold policies operating through two distinct channels: changes in treatment assignment and incentive-induced changes in individuals' response.
Under the rank-invariance-type conditions and mild regularity conditions, we showed that these effects can be locally identified.
Our empirical application illustrates the importance of accounting for both channels, as relying solely on the conventional RD effect can lead to misleading conclusions about the overall consequences of a threshold policy.

More broadly, this paper contributes to the literature by formalizing a potentially important causal channel that has received limited attention in conventional RD analyses. We view our framework as a starting point for further research on threshold policies. Although our analysis focuses on settings with random assignment of thresholds, extending the framework to observational settings is an important direction for future work. Some of our identifying assumptions may carry over directly, whereas others will need to be replaced or strengthened. It would also be useful to integrate our analysis with recent developments in the RD literature, such as multidimensional assignment rules.

\bibliographystyle{apalike} 
\bibliography{refs}

\newpage

\begin{center}
    \LARGE Online Appendices
\end{center}
\begin{center}
    \large Shunsuke Imai, Koshi Nishida, \& Yuta Okamoto
\end{center}
\appendix
\section{Proofs}\label{sec:proof}

\begin{proof}[Proof of Proposition \ref{prop:ident}]
    Recall $u_c = F_c(c)$.
    By Assumption \ref{assumption:rct}, for any $x\in \mathcal X$,
    \begin{align*}
        F_c(x) = \P{X_i(c) \le x} = \P{X_i \le x \mid C_i=c} = F_c^{\mathtt{obs}}(x),
    \end{align*}
    so that $F_c(x)$ is identified as $F_c^{\mathtt{obs}}(x)$.
    By Assumption \ref{assumption:cont}(ii), $F_c$ is continuous and strictly increasing on $\mathrm{int}(\mathcal X)$. Hence, $Q_c$ is the inverse of $F_c$. 
    It therefore follows that $R_i^c = F_c(X_i(c)) = F_c(Q_c(R_i^c))$.
    Then, $\mathbb{E}[Y_i(c,d,Q_c(u)) \mid R_i^c = u] = \mathbb{E}[Y_i(c,d,Q_c(u)) \mid X_i(c) = Q_c(u)]$ is continuous in $u\in(0,1)$ by Assumption \ref{assumption:cont}. This suggests the continuity of $\mathbb{E}[Y_i \mid F_{c}^{\mathtt{obs}}(X_i) = u, C_i=c]$ for $u<u_c$ and $u\geq u_c$.

    We begin by proving identification of $\tau_c(c)$. Note that
    \begin{align*}
        \tau_c(c) &= \E{Y_i(c,X_i(c),1)-Y_i(c,X_i(c),0)\mid X_i(c) = c}\\
        &= \E{Y_i(C_i,X_i,1)-Y_i(C_i,X_i,0)\mid R_i^c=u_c, C_i=c}\\
        &= \E{Y_i(C_i,X_i,1)\mid F_c^{\mathtt{obs}}(X_i)=u_c, C_i=c} - \E{Y_i(C_i,X_i,0)\mid F_c^{\mathtt{obs}}(X_i)=u_c, C_i=c}
    \end{align*}
    where the second equality uses Assumption \ref{assumption:rct}.
    By the continuity established above,
    \begin{align*}
         &\E{Y_i(C_i,X_i,1)\mid F_c^{\mathtt{obs}}(X_i)=u_c, C_i=c} - \E{Y_i(C_i,X_i,0)\mid F_c^{\mathtt{obs}}(X_i)=u_c, C_i=c}\\
         &= 
         \lim_{v\downarrow F_c^{\mathtt{obs}}(c)}\E{Y_i\mid F_c^{\mathtt{obs}}(X_i)=v, C_i=c} - \lim_{v\uparrow F_c^{\mathtt{obs}}(c)}\E{Y_i\mid F_c^{\mathtt{obs}}(X_i)=v, C_i=c}.
    \end{align*}
    We then show identification of $\theta_{l,0}(l)$. Note that 
    \begin{align*}
        \theta_{l,0}(l) &= \E{Y_i(l,X_i(l),0)-Y_i(h,X_i(h),0) \mid X_i(l) = l}\\
        &= \E{Y_i(l,X_i(l),0) \mid X_i(l) = l} - \E{Y_i(h,X_i(h),0) \mid R_i^h = u_l}\\
        &= \E{Y_i(l,X_i,0)\mid R_i^l=u_l, C_i=l} - \E{Y_i\mid R_i^h=u_l, C_i=h}\\
        &= \lim_{v \uparrow F_l^{\mathtt{obs}}(l)}\E{Y_i\mid F_l^{\mathtt{obs}}(X_i)=v, C_i=l} - \E{Y_i\mid F_h^{\mathtt{obs}}(X_i)=F_l^{\mathtt{obs}}(l), C_i=h},
    \end{align*}
    where the second equality uses Assumption \ref{assumption:rank}, and the third equality uses Assumptions \ref{assumption:rct} and \ref{assumption:propensity}.
    Identification of $\theta_{h,1}(h)$ can be shown analogously.
\end{proof}

\begin{proof}[Proof of Proposition \ref{prop:bounds}]
    The validity is shown in the main text. It remains to prove the sharpness.
    Recall $m_h(r) := \E{Y_i(h,X_i(h),0) \mid R_i^h = r}$. 
    It suffices to show that every value in $[\sup_{g \in \mathcal F_{\mathtt{conv}}^{-}} g(u_l),\inf_{g \in \mathcal F_{\mathtt{conc}}^{+}} g(u_l)]$ can be attained by $\E{Y_i(h,X_i(h),0) \mid R_i^l = u_l}$.

    We first construct the conditional distribution of $R_i^h$ given $R_i^l=u_l$.
    Under Assumption \ref{assumption:cont}, $m_h$ admits a continuous extension to $[0,1]$.
    Together with the compactness of $\bar{\mathcal S}$ (Assumption \ref{assumption:high}), there exists $a \leq u_l \leq b$ such that $a,b \in \bar{\mathcal S}$ and
    \begin{align*}
        \inf_{g \in \mathcal F_{\mathtt{conc}}^{+}} g(u_l) 
        =
        \pi m_h(a) + (1 - \pi) m_h(b),
        \qquad 
        \pi = \frac{b - u_l}{b - a},
    \end{align*}
    when $a<b$. If $a=u_l=b$, the right-hand side is understood as $m_h(u_l)$.
    Here, we use the chord representation established in the main text.
    
    Define the two-point distribution
    \begin{align*}
        P_U = \begin{cases}
            \pi \delta_a + (1-\pi) \delta_b & (a<b),\\
            \delta_{u_l} & (a=b).
        \end{cases}
    \end{align*}
    Then, it follows that $\int r \, d P_U(r) = u_l$ and $\int m_h(r)\, d P_U(r) = \inf_{g \in \mathcal F_{\mathtt{conc}}^{+}} g(u_l)$.
    Similarly, there exists two-point distribution $P_L$ such that $\int r \, d P_L(r) = u_l$ and $\int m_h(r)\, d P_L(r) = \sup_{g \in \mathcal F_{\mathtt{conv}}^{-}} g(u_l)$.
    Hence, by defining $P_\lambda^* = \lambda P_L + (1-\lambda)P_U$ with $\lambda\in[0,1]$, $\int r \, d P_\lambda^*(r) = u_l$ and
    \begin{align*}
        \int m_h(r)\, d P_\lambda^*(r) = \lambda \sup_{g \in \mathcal F_{\mathtt{conv}}^{-}} g(u_l) + (1-\lambda) \inf_{g \in \mathcal F_{\mathtt{conc}}^{+}} g(u_l).
    \end{align*}
    Therefore, there exists a conditional distribution $P_\lambda^*$ that is consistent with Assumption \ref{assumption:mean-rank}(i), $\mathrm{supp}(R_i^h \mid R_i^l = u_l) \subset \bar{\mathcal S}$, and attains $\mathbb{E}[m_h(R_i^h) \mid R_i^l = u_l] = \lambda \sup_{g \in \mathcal F_{\mathtt{conv}}^{-}} g(u_l) + (1-\lambda) \inf_{g \in \mathcal F_{\mathtt{conc}}^{+}} g(u_l)$ for any $\lambda \in[0,1]$, although this construction does not yet embed $P_\lambda^*$ into a joint distribution of $(R_i^h, R_i^l)$ with uniform marginals. We next provide such an embedding.

    Let $\{r_1,\ldots,r_J\}$ ($J\le4$) denote the finite support of $P_\lambda^*$ with associated probabilities $p_1,\ldots,p_J$. Note that $r_j \in \bar{\mathcal S}$ for every $j$ by construction.
    For each $r_j$, we can choose a Lebesgue-measure-preserving map $T_j: [0,1]\to[0,1]$ such that $T_j(u_l) = r_j$, and it is continuous at $u_l$.
    Such a map can be constructed as follows: When $r_j \in (0,1)$, we can take $T_j(u) = \{- u + r_j + u_l\}$, where $\{x\} = x - \lfloor x \rfloor$. This map is measure-preserving since $r_j + u_l$ is a constant. Since $r_j \in (0,1)$, it is also continuous at $u_l$.
    When $r_j = 0$, we begin by defining, for $c\in(0,1)$,
    \begin{align*}
        T_j^c(u) := \begin{cases}
            \{u - u_l\}/c &  \text{if } \{u - u_l\}\in[0,c],\\
            (1 - \{u - u_l\}) / (1-c)  &  \text{if } \{u - u_l\}\in(c,1].
        \end{cases}
    \end{align*}
    It is straightforward to see that the measure-preserving property and continuity are satisfied.
    Thus we may simply fix $c\in(0,1)$ and set $T_j = T_j^c$.
    For $r_j=1$, we can find $T_j$ by a similar construction.
    
    Put $R_i^l \sim \mathrm{Unif}[0,1]$ and let $J_i$ be a discrete random variable that is independent of $R_i^l$ and satisfies $\P{J_i = j} = p_j$.
    Define $R_i^h = T_{J_i}(R_i^l)$. Then it holds that $R_i^h \sim \mathrm{Unif}[0,1]$.
    Moreover, it follows that
    \begin{align*}
        \P{R_i^h \in A \mid R_i^l = u_l} = \sum_{j=1}^J p_j \mathbf{1}\{T_j(u_l) \in A\}
        = \sum_{j=1}^J p_j \mathbf{1}\{r_j \in A\} = P_\lambda^*(A).
    \end{align*}
    Furthermore, by letting $G(u,A):=\P{R_i^h \in A \mid R_i^l = u}$, for every continuous bounded function $\phi$,
    \begin{align*}
        \int \phi(r) G(u,dr) = \sum_{j=1}^J p_j \phi(T_j(u)) \to 
        \sum_{j=1}^J p_j \phi(T_j(u_l)) = \sum_{j=1}^J p_j \phi(r_j)
        = \int \phi(r) dP_\lambda^*
    \end{align*}
    as $u\to u_l$, implying that Assumption \ref{assumption:cont-rank}(ii) is satisfied.
    Hence, $R_i^h \mid (R_i^l = u_l) \sim P_\lambda^*$.
    Thus, $P_\lambda^*$ can be embedded into a joint distribution of $(R_i^h, R_i^l)$ with uniform marginals.

    We now define the potential outcomes $Y_i(c,x,d)$.
    It is convenient to start with the potential outcomes evaluated at the realized potential scores.
    Let $P_{\mathtt{obs}}$ denote the distribution of the observed variables (satisfying the observed-data implication of Assumption \ref{assumption:propensity}).
    By Assumption \ref{assumption:high}, $m_h(r)=\mathbb{E}[Y_i(h,X_i(h),0) \mid R_i^h = r]$ is identified from $P_{\mathtt{obs}}$ over $\bar{\mathcal S}$. Let $m_h^{\mathtt{obs}}(r)$ denote the identified function. 
    Note that $\mu_l := \lim_{r\uparrow u_l}\mathbb{E}[Y_i(l,X_i(l),0) \mid R_i^l = r]$ is also identified from $P_{\mathtt{obs}}$.
    
    Take any full-data distribution $P_0$ compatible with $P_{\mathtt{obs}}$ and satisfying Assumption \ref{assumption:cont} and $\mathbb{E}_{P_0}[Y_i(h,X_i(h),0) \mid R_i^h = r] = m_h^{\mathtt{obs}}(r)$ over $\bar{\mathcal S}$.
    For $c\in\{l,h\}$ and $d\in\{0,1\}$, let $\mu_{c,d}^0(x) := \mathbb{E}_{P_0}[Y_i(c,x,d) \mid X_i(c) = x]$, which is continuous by Assumption~\ref{assumption:cont}, and define $m_{c,d}^0(r) := \mu_{c,d}^0(Q_c(r))$. We use the continuous extension of these function to the boundary whenever necessary. In particular, $m_{h,0}^0(r) = m_h^{\mathtt{obs}}(r)$ for $r \in \bar{\mathcal S}$

    Define
    \begin{align*}
        \tilde{P}_c(A \mid r) := P_0[(Y_i(c,X_i(c),0), Y_i(c,X_i(c),1)) \in A \mid R_i^c = r]
    \end{align*}
    for each $c\in\{l,h\}$. 
    Because $\tilde{P}_c$ are defined only almost everywhere in $r$, we choose their versions such that, for every $r\in[0,1]$,
    \begin{align*}
        \int y_d \tilde{P}_c(dy_0, dy_1 \mid r_j) = m_{c,d}^0(r),
        \quad
        c\in\{l,h\}\,\,\text{and}\,\, d\in\{0,1\}.
    \end{align*}
    This modification affects $\tilde{P}_c$ only on sets of Lebesgue measure zero and therefore does not alter the corresponding joint distributions under $P_0$. In particular,    
    \begin{align*}
        \int y_0 \tilde{P}_h(dy_0, dy_1 \mid r_j) = m_h^{\mathtt{obs}}(r)\,\,\text{ for }\,\,r\in\bar{\mathcal S},
        \quad \text{ and }\quad
        \int y_0 \tilde{P}_l(dy_0, dy_1 \mid u_l) = \mu_l.
    \end{align*}
    
    Now take the joint distribution of $(R_i^l,R_i^h)$ constructed above.
    Conditional on $(R_i^l,R_i^h)$, generate 
    \begin{align*}
        \big(Z_{i,l,0}, Z_{i,l,1}\big) \mid (R_i^l,R_i^h)
        \sim 
        \tilde{P}_l(\cdot,\cdot \mid R_i^l),
    \end{align*}
    and conditional on $(R_i^l, R_i^h, Z_{i,l,0}, Z_{i,l,1})$, generate
    \begin{align*}
        \big(Z_{i,h,0}, Z_{i,h,1}\big) \mid (R_i^l, R_i^h, Z_{i,l,0}, Z_{i,l,1})
        \sim 
        \tilde{P}_h(\cdot,\cdot \mid R_i^h).
    \end{align*}
    Define $X_i(c) := Q_c(R_i^c)$ for $c\in\{l,h\}$.
    For every $x\in\mathcal X$ and $d\in\{0,1\}$, define the (full) potential outcomes by
    \begin{align*}
        Y_i(c,x,d) := 
        Z_{i,c,d} + \mu_{c,d}^0(x) - \mu_{c,d}^0(X_i(c)).
    \end{align*}
    Then, it satisfies $Y_i(c,X_i(c),d) = Z_{i,c,d}$.
    Next, define $D_i(c) := \mathbf{1}\{X_i(c)\ge c\}$ for $c\in\{l,h\}$.
    Draw $C_i$ independently to the potential variables constructed above, with the same assignment probability as under $P_{\mathtt{obs}}$, and let $X_i := X_i(C_i)$, $D_i := D_i(C_i)$, and $Y_i := Y_i(C_i, X_i(C_i), D_i(C_i))$.
    Let $P_\lambda^\circ$ denote the resulting joint distribution.
    By construction, the joint distribution of $(Y_i(c,X_i(c),0), Y_i(c,X_i(c),1), R_i^c) = (Z_{i,c,0}, Z_{i,c,1}, R_i-c)$ under $P_\lambda^\circ$ is identical to that under $P_0$ for each $c\in\{l,h\}$. Hence $P_\lambda^\circ$ generates the same observed-data distribution $P_{\mathtt{obs}}$.

    We now verify the constructed variables satisfy the maintained assumptions.
    Since $C_i$ is drawn independently, Assumption~\ref{assumption:rct} is satisfied.
    Noting that 
    \begin{align*}
        \mathbb{E}_{P_\lambda^\circ}[Y_i(c,x,d) \mid X_i(c) = x]
        = \mathbb{E}_{P_\lambda^\circ}[Z_{i,c,d} \mid X_i(c) = x]
        = \mu_{c,d}^0(x),
    \end{align*}
    we find that Assumption~\ref{assumption:cont}(i) is met.
    Assumption~\ref{assumption:cont}(ii) is also preserved because $R_i^c\sim \mathrm{Unif}[0,1]$ so that $X_i(c) = Q_c(R_i^c)$ has the same marginal distribution under $P_0$.
    The marginal distributions of $X_i(l)$ and $X_i(h)$ are unchanged, so the treatment propensities are unchanged as well and Assumption \ref{assumption:propensity} is preserved.
    We next verify Assumption \ref{assumption:cont-rank}(i). For every $(u,r)$ in the support of $(R_i^l,R_i^h)$,
    \begin{align*}
        \mathbb{E}_{P_\lambda^\circ}[Y_i(h,X_i(h),0)\mid R_i^l=u,R_i^h=r]
        =
        \mathbb{E}_{P_\lambda^\circ}[Z_{i,h,0}\mid R_i^l=u,R_i^h=r]
        =
        m_{h,0}^0(r).
    \end{align*}
    Since $m_{h,0}^0$ is continuous, this conditional mean is continuous, relative to $\mathrm{supp}(R_i^l,R_i^h)$, at every point $(u_l,r)$ in that support. Thus Assumption \ref{assumption:cont-rank}(i) is satisfied.
    Assumption \ref{assumption:cont-rank}(ii) is already verified.
    The construction also implies Assumption \ref{assumption:mean-rank}(ii):
    \begin{align*}
        \mathbb{E}_{P_\lambda^\circ}\left[Y_i(h,X_i(h),0) \mid R_i^l = u_l, R_i^h\right]
        &= m_{h,0}^0(R_i^h) 
        = \mathbb{E}_{P_\lambda^\circ}[Y_i(h,X_i(h),0) \mid R_i^h].
    \end{align*}
    Furthermore, Assumption \ref{assumption:mean-rank}(i) and the support condition in Assumption \ref{assumption:high} are inherited from the properties of $P_\lambda^*$.
    
    Finally, by construction, $\mathbb{E}_{P_\lambda^\circ}[Y_i(l,X_i(l),0)\mid R_i^l=u_l] = \mu_l$ and hence $\theta_{l,0}(l) = \mu_l - \mathbb{E}_{P_\lambda^\circ}[Y_i(h,X_i(h),0)\mid R_i^l=u_l]$.
    Note that
    \begin{align*}
        &\mathbb{E}_{P_\lambda^\circ}\left[Y_i(h,X_i(h),0) \mid R_i^l = u_l\right]
        = 
        \mathbb{E}_{P_\lambda^\circ}\left[
        \mathbb{E}_{P_\lambda^\circ}\left[Y_i(h,X_i(h),0) \mid R_i^l = u_l, R_i^h\right] 
        \mid R_i^l = u_l\right]\\
        &\quad= 
        \mathbb{E}_{P_\lambda^\circ}\left[
        m_{h,0}^0(R_i^h)
        \mid R_i^l = u_l\right]
        = 
        \int m_{h,0}^0(r)\,dP_\lambda^*(r)
        = 
        \int m_h^{\mathtt{obs}}(r)\,dP_\lambda^*(r)\\
        &\quad= \lambda \sup_{g \in \mathcal F_{\mathtt{conv}}^{-}} g(u_l) + (1-\lambda) \inf_{g \in \mathcal F_{\mathtt{conc}}^{+}} g(u_l),
    \end{align*}
    where the fourth equality uses $\mathrm{supp}(P_\lambda^*)\subseteq\bar{\mathcal S}$ and $m_{h,0}^0(r)=m_h^{\mathtt{obs}}(r)$ on $\bar{\mathcal S}$.
    Because $\lambda$ ranges over $[0,1]$, every value inbetween the convex-minorant and concave-majorant bounds is attained, and consequently every point in the identified set stated in Proposition~\ref{prop:bounds} is attained by an observationally equivalent full-data distribution satisfying all maintained assumptions.
    This proves the sharpness of the identified set.
\end{proof}

\section{Additional Empirical Results}\label{sec:addemp}

This section reports the corresponding results for students who were in fifth grade in 2015 and compares them with the sixth-grade results presented in the main text.

Figure~\ref{fig_sa} plots expected subsequent test scores conditional on latent rank for both grades. The RD jump at the scholarship cutoff is positive for fifth-grade students as well, although it is smaller than that for sixth-grade students. Near the cutoff, the fifth-grade sample does not exhibit the pronounced downward shift in outcomes observed among sixth-grade students assigned to the scholarship arm.
\begin{figure}[t]
    \centering
    \begin{subfigure}[b]{0.49\textwidth}
        \centering
        \includegraphics[width=\linewidth,trim=0 10 30 50,clip]{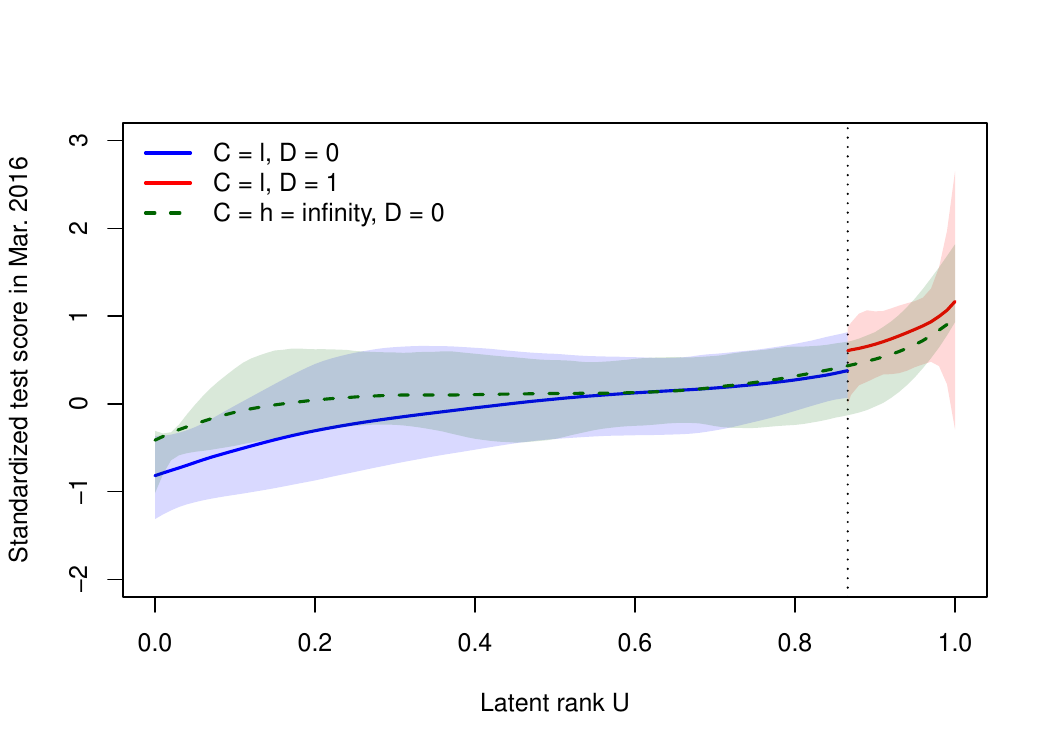}
        \caption{Grade 5}
        \label{fig5_sa}
    \end{subfigure}
    \hfill
    \begin{subfigure}[b]{0.49\textwidth}
        \centering
        \includegraphics[width=\linewidth,trim=0 10 30 50,clip]{plots/fig_grade6.pdf}
        \caption{Grade 6}
        \label{fig6_sa}
    \end{subfigure}
    \caption{Expected Test Scores Conditional on the Rank}
    \label{fig_sa}

    \begin{flushleft}
    \footnotesize
    \renewcommand{\baselineskip}{11pt}
    \textbf{Note:} The estimated IMSE-optimal bandwidths are used. The shaded regions indicate the 95\% robust-bias corrected (RBC) confidence intervals.
    The RBC standard errors are clustered at school level. The cutoffs on the latent-rank scale are approximately $0.87$ and $0.89$. The top-15-percent criterion was computed among all students, including students in the control arm and those excluded from our analysis. The corresponding cutoff in our analysis sample therefore need not equal $0.85$. 
    \end{flushleft}
\end{figure}

Table~\ref{tab:te_sa} reports the corresponding treatment-effect estimates. At the cutoff rank $u_l$, the point estimate of the conventional RD effect is positive for fifth-grade students but smaller than that for sixth-grade students.
In contrast to the sixth-grade results, the estimated incentive-response effect for fifth-grade students is small and sensitive to the bandwidth choice: it is slightly negative using the IMSE bandwidth and slightly positive using the RD bandwidth. More generally, except at rank $u=0$, the estimated incentive-response effects are substantially smaller in magnitude for fifth-grade students than for sixth-grade students.
\begin{table}[t]
    \begin{center}
    \scalebox{0.93}{
    \begin{tabular}{lccccc}\hline\hline
        \multirow{2}{*}{Rank $u$} & \multirow{2}{*}{$0.0$} & \multirow{2}{*}{$0.4$} & \multirow{2}{*}{$0.8$} & \multicolumn{2}{c}{$u_l \approx 0.87$ or $0.89$}\\
         & & & & (IMSE BW) & (RD BW)\\\hline
        \textbf{Grade 5}\\
        Treatment & -- & -- & -- & $0.227$ & $0.184$\\
        \,\,\, assignment & -- & -- & -- & [$-0.59, 0.56$] & [$-1.16, 0.72$]\\
        Incentive & $-0.404$ & $-0.151$ & $-0.043$ & $-0.056$ & $0.144$\\
        \,\,\, response & [$-0.77, 0.42$] & [$-0.78, 0.73$] & [$-0.49, 0.68$] & [$-0.41, 0.72$] & [$-0.60, 1.13$]\\
         \multirow{2}{*}{Total} & $-0.404$ & $-0.151$ & $-0.043$ & $0.172$ & $0.328$\\
         & [$-0.77, 0.42$] & [$-0.78, 0.73$] & [$-0.49, 0.68$] & [$-0.47, 0.75$] & [$-0.42, 0.50$]\\\hline
         \textbf{Grade 6}\\
        Treatment & -- & -- & -- & $0.418$ & $0.424$\\
        \,\,\, assignment & -- & -- & -- & [$-1.01, 1.73$] & [$0.60, 3.02$]\\
        Incentive & $0.308$ & $-0.391$ & $-0.782$ & $-1.073$ & $-1.109$\\
        \,\,\, response & [$-0.26, 1.41$] & [$-0.97, -0.05$] & [$-1.27, -0.08$] & [$-1.63, -0.05$] & [$-2.67, -1.32$]\\
         \multirow{2}{*}{Total} & $0.308$ & $-0.391$ & $-0.782$ & $-0.655$ & $-0.685$\\
         & [$-0.26, 1.41$] & [$-0.97, -0.05$] & [$-1.27, -0.08$] & [$-1.67, 0.71$] & [$-1.26, 0.89$]\\\hline
    \end{tabular}}
    \caption{Treatment Effects}
    \label{tab:te_sa}
    \end{center}

    \footnotesize
    \renewcommand{\baselineskip}{11pt}
    \textbf{Note:} 
    The conventional point estimates and 95\% robust bias-corrected confidence intervals are reported. Because the confidence intervals incorporate bias correction, they are not necessarily centered on the corresponding point estimates.
    For all columns except the last, we use the estimated IMSE bandwidths used in Figure~\ref{fig}. In the last column (RD BW), the MSE-optimal bandwidths for RD estimator proposed in \cite{Imbens_Kalyanaraman:2011} are used for all estimators.
    The robust-bias corrected standard error is clustered at school level. 
\end{table}

One possible interpretation is that the threshold-based incentive generated less discouragement among fifth-grade students, except perhaps among those at the very bottom of the rank distribution. Consistent with this possibility, Figure~\ref{qte_sa} shows that the negative quantile treatment effects of threshold assignment on the June 2015 test score are much less pronounced for fifth-grade students than for sixth-grade students, except at very low ranks. Thus, the threshold policy appears to have had a considerably weaker negative effect on the contemporaneous score distribution in fifth grade, which may help explain why the adverse incentive-response effect on subsequent achievement is also much smaller.
\begin{figure}[t]
    \centering
    \begin{subfigure}[b]{0.49\textwidth}
        \centering
        \includegraphics[width=\linewidth,trim=0 10 30 50,clip]{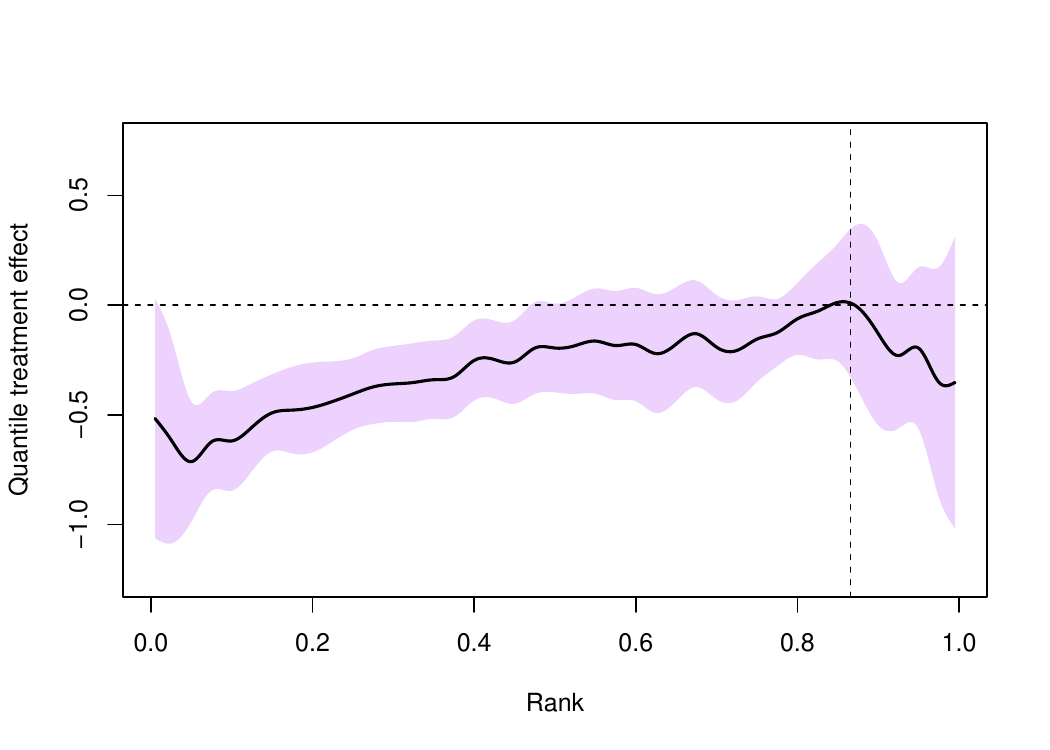}
        \caption{Grade 5}
        \label{fig:qte5_sa}
    \end{subfigure}
    \hfill
    \begin{subfigure}[b]{0.49\textwidth}
        \centering
        \includegraphics[width=\linewidth,trim=0 10 30 50,clip]{plots/qte_grade6.pdf}
        \caption{Grade 6}
        \label{fig:qte6_sa}
    \end{subfigure}
    \caption{Quantile Treatment Effect of Threshold Assignment on Test Score in June 2015}
    \label{qte_sa}
\end{figure}

These grade-specific patterns also illustrate how the policy implications can differ even when the conventional RD effect is positive in both groups. For sixth-grade students, the positive treatment-assignment effect is accompanied by a sizable negative incentive-response effect, raising the possibility that implementing the scholarship through a merit-based threshold may be undesirable despite a positive effect of scholarship receipt itself. For fifth-grade students, by contrast, the negative incentive-response effect appears much smaller relative to the treatment-assignment effect, suggesting that merit-based implementation may be more reasonable in this group. At the same time, among the fifth-grade students, the incentive-response effect appears to become more negative at lower ranks. This pattern suggests that the preferred policy rule may depend on students' baseline academic achievement, with less demanding or alternative assignment rules potentially being more appropriate for lower-performing students.

\end{document}